\documentclass[12pt]{article}
\usepackage{bm}
\usepackage{amsthm}
\usepackage{amsfonts}
\usepackage{amssymb}
\usepackage{amsmath}
\usepackage{multirow}

\makeatletter
\theoremstyle{plain}
\newtheorem{thm}{\protect\theoremname}
  \theoremstyle{plain}
  \newtheorem{cor}[thm]{\protect\corollaryname}
  \theoremstyle{plain}
  \newtheorem{prop}[thm]{\protect\propositionname}
  \theoremstyle{definition}
  \newtheorem{defn}[thm]{\protect\definitionname}
  \theoremstyle{plain}
  \newtheorem{lem}[thm]{\protect\lemmaname}

\makeatletter
\DeclareMathOperator{\dom}{dom}

\DeclareMathOperator{\si}{subIso}

\DeclareMathOperator{\size}{size}
\DeclareMathOperator{\ar}{ar}
\newcommand{\mhyphen}{\mbox{\it -}}
\newcommand{\tup}[1]{\langle #1 \rangle}

\makeatother

\usepackage[english]{babel}
  \providecommand{\corollaryname}{Corollary}
  \providecommand{\definitionname}{Definition}
  \providecommand{\lemmaname}{Lemma}
  \providecommand{\propositionname}{Proposition}
\providecommand{\theoremname}{Theorem}

\title{The Complexity of Definability\\ by Open First-Order Formulas}

\date{}

\author{
\begin{tabular}{cc}
Carlos Areces &
Miguel Campercholi \\
\texttt{\small carlos.areces@gmail.com} & 
\texttt{\small mcampercholi@gmail.com}\\[1em]
Daniel Penazzi & 
Pablo Ventura \\
\texttt{\small danielpenazzi@gmail.com}  &
\texttt{\small pablogventura@gmail.com} \\[1em]
\multicolumn{2}{c}{Universidad Nacional de C\'ordoba and CONICET}\\
\multicolumn{2}{c}{C\'ordoba, Argentina}
\end{tabular}
}

\begin{document}

\maketitle

\begin{abstract}
  In this article we formally define and investigate the computational
  complexity of the Definability Problem for open first-order formulas
  (i.e., quantifier free first-order formulas) with equality. Given a
  logic $\bm{\mathcal{L}}$, the $\bm{\mathcal{L}}$-Definability
  Problem for finite structures takes as input a finite structure
  $\bm{A}$ and a target relation $T$ over the domain of $\bm{A}$, and
  determines whether there is a formula of $\bm{\mathcal{L}}$ whose
  interpretation in $\bm{A}$ coincides with $T$. We show that the
  complexity of this problem for open first-order formulas (open
  definability, for short) is coNP-complete. We also investigate the
  parametric complexity of the problem, and prove that if the size and
  the arity of the target relation $T$ are taken as parameters then
  open definability is $\mathrm{coW}[1]$-complete for every vocabulary
  $\tau$ with at least one, at least binary, relation.
\end{abstract}

\section{Introduction}

Arguably, any attempt to provide a logic $\bm{\mathcal{L}}$ with a
formal semantics starts with the definition of a function that, given
a suitable structure $\bm{A}$ for $\bm{\mathcal{L}}$ and a formula
$\varphi$ in $\bm{\mathcal{L}}$, returns the \emph{extension} of
$\varphi$ in $\bm{A}$.  Usually, this extension is a set of tuples
built from elements in $\bm{A}$. These extensions, also called
\emph{definable sets,} are the elements that will be referred by the
formulas of $\bm{\mathcal{L}}$ in a given structure, and in that
sense, define the expressivity of $\bm{\mathcal{L}}$. The definable
sets of $\bm{A}$ are the only objects that $\bm{\mathcal{L}}$ can
\emph{see}. For that reason, definable sets are one of the central
objects studied by Model Theory. It is usually an interesting question
to investigate, given a logic $\bm{\mathcal{L}}$, which are the
definable sets of $\bm{\mathcal{L}}$ over a given structure $\bm{A}$,
or, more concretely, whether a particular set of tuples is a definable
set of $\bm{\mathcal{L}}$ over $\bm{A}$. This is what we call the
\emph{Definability Problem for $\bm{\mathcal{L}}$ over $\bm{A}$}.

In this article we investigate the computational complexity of the
definability problem for open first-order formulas --i.e.,
quantifier free first-order formulas-- with equality over a
relational vocabulary (open-definability, for short).

One of the main goals of Computational Logic is to understand the
computational complexity of different problems for different logics.
Classically, one of the most investigated inference problems is
\emph{Satisfiability} (SAT, for short): given a formula $\varphi$ from
a given logic $\bm{\mathcal{L}}$ decide whether there exists a structure
that makes $\varphi$ true. In recent years, and motivated by concrete
applications, other reasoning problems have sparkled interest. A well
known example is the \emph{Model Checking Problem} (MC, for short)
used in software verification to check that a given property $P$
(expressed as a formula in the verification language) holds in a given
formal representation $S$ of the system (see,
e.g.,~\cite{clar:mode99,bera:syst10}). From a more general
perspective, MC can be defined as follows: given a structure $\bm{A}$,
and a formula $\varphi$ decide which is the extension
$T$ of $\varphi$ in $\bm{A}$. From that perspective,
the definability problem can be understood as the \emph{inverse}
problem of MC: given a structure $\bm{A}$ and a target set $T$ it asks
whether there is a formula $\varphi$ whose extension
is $T$.  A further example of a reasoning task related to
definability comes from a seemingly unrelated field: computational
linguistics, more specifically, in the subarea of automated language
generation called Generation of Referring Expressions (GRE). The GRE
problem can be intuitively understood as follows: given a context $C$
and an target object $t$ in $C$, generate a grammatically correct
description (in some natural language) that represents $t$,
differentiating it from other objects in $C$, or report failure if
such a description does not exist (see~\cite{krah:comp12} for a survey
on GRE). Most of the work in this area is focused on the content
determination problem (i.e., finding the properties that singles out
the target object) and leaves the actual realization (i.e., expressing
this content as a grammatically correct expression) to standard
techniques. As it is discussed in~\cite{arec:refe08,arec:usin11} the
content realization part of the GRE problem can be understood as the
task that, given a structure $\bm{A}$ that represents the context $C$, and
an object $t$ in the domain of $\bm{A}$ returns a formula $\varphi$ in
a suitable logic $\bm{\mathcal{L}}$ whose extension in $\bm{A}$
coincides with $t$. Of course, this will be possible only if $t$ is
definable for $\bm{\mathcal{L}}$ over $\bm{A}$.

The complexity of the definability problem for a number of logics has
already been investigated. Let \textbf{FO} be first-order logic with
equality in a vocabulary without constant symbols. The computational
complexity of \textbf{FO}-definability was already discussed in
1978~\cite{pare:expr78,banc:comp78}, when a semantic characterization
of the problem, based on automorphisms, placed
\textbf{FO}-definability within coNP. Much more recently,
in~\cite{Aren:exac16}, a polynomial-time algorithm for
\textbf{FO}-definability was given, which uses calls to a
graph-isomorphism subroutine as an oracle. As a consequence,
\textbf{FO}-definability is shown to be inside GI (defined as the set
of all languages that are polynomial-time Turing reducible to the
graph isomorphism problem). The authors also show that the problem is
GI-hard and, hence, GI-complete. Interestingly, Willard showed in
\cite{will:test10}, that the complexity of the definability problem
for the fragment of \textbf{FO} restricted to conjunctive queries (i.e.,
formulas of the form $\exists \bar x \bigwedge_i C_i$, where each
conjunct $C_i$ is atomic) was coNEXPTIME-complete. The complexity upper
bound followed from a semantic characterization of
\textbf{CQ}-definability in terms of polymorphisms given
in~\cite{jeav:howt99}, while the lower bound is proved by an encoding
of a suitable tiling problem. The complexity of definability has been
investigated also for some modal languages: \cite{arec:refe08} shows
that for the basic modal logic $K$, the definability problem is
tractable (i.e., in P); in \cite{arec:usin11} the result is extended
to some fragments of $K$ known as $\mathcal{EL}$ and $\mathcal{EL}^+$.
\cite{figu:size10} discusses the length of the shortest formula
required to define a given target set, proving that for
$\bm{\mathcal{L}} \in \{K, \mathcal{EL}, \mathcal{EL^+}\}$, the lower
bound for the length of a definition is exponential in the size of the
input structure. More precisely, it is shown that there are structures
$G_1$, $G_2$, \ldots such that for every $i$, the size of $G_i$ is
linear in $i$ but the size of the shortest definition for some element
in $G_i$ is bounded from below by a function which is exponential on
$i$.

The article is structured as follows. After introducing basic
notations and definitions in Section~\ref{sec:basic}, we show that
open-definability is coNP-complete in
Section~\ref{sec:Classical-Complexity}. Section~\ref{sec:param}
discusses the parameterized complexity of the problem. Finally, in
Section~\ref{sec:LargoFormulas} we show that the length of the
shortest open formula that might be required in a definition cannot be
bounded by a polynomial: in some cases, definitions by open formulas
need to be exponentially long.

\section{Preliminaries\label{sec:basic}}

In this section we provide some basic definitions and fix  notation.
We assume basic knowledge of first-order logic. For a detailed account
see, e.g., \cite{ebbi:math}.

We focus on definability by open first-order formulas in a purely
relational first-order vocabulary, i.e., without function or constant
symbols. For a relation symbol $R$ in a vocabulary $\tau$, let
$\ar(R)$ denote the arity of $R$. In what follows, all vocabularies
are assumed to be finite and purely relational. We assume that the
language contains variables from a countable, infinite set
$\textsf{VAR}=\{x_{1},x_{2},\ldots,x_{n},\ldots\}$.  Variables from
$\textsf{VAR}$ are the only \emph{terms} in the language.
\emph{Atomic formulas} are either of the form $v_{i}=v_{j}$ or
$R(\bar{v})$, where $v_{i},v_{j}\in\textsf{VAR}$, $\bar{v}$ is a
sequence of variables in $\textsf{VAR}$ of length $k$ and $R$ is a
relation symbol of arity $k$. \emph{Open formulas} are Boolean
combinations of atomic formulas. We shall often write just formula
instead of open formula. We write $\varphi(v_{1},\ldots,v_{k})$ for an
open formula $\varphi$ whose variables are included in
$\{v_{1},\ldots,v_{k}\}$.

Let $\tau$ be a vocabulary. A \emph{$\tau$-structure} (or \emph{model})
is a pair $\bm{A}=\tup{A,\cdot^{\bm{A}}}$ where $A$ is a non-empty
set (the \emph{domain} or \emph{universe}), and $\cdot^{\bm{A}}$
is an \emph{interpretation function} that assigns to each $k$-ary
relation symbol $R$ in $\tau$ a subset $R^{\bm{A}}$ of $A^{k}$.
If $\bm{A}$ is a structure we write $A$ for its domain and $\cdot^{\bm{A}}$
for its interpretation function. Given a formula $\varphi(v_{1},\ldots,v_{k})$,
and a sequence of elements $\bar{a}=\tup{a_{1},\ldots,a_{k}}\in A^{k}$
we write $\bm{A}\models\varphi[\bar{a}]$ if $\varphi$ is true in
$\bm{A}$ under an assignment that maps $v_{i}$ to $a_{i}$.

We say that a subset $T\subseteq A^{k}$ is \emph{open-definable}
in $\bm{A}$ if there is an open first-order formula $\varphi(x_{1},\dots,x_{k})$
in the vocabulary of $\bm{A}$ such that 
\[
T=\{\bar{a}\in A^{k}:\bm{A}\models\varphi[\bar{a}]\}.
\]
In this article we study the following computational decision problem:

\begin{center}
\begin{tabular}{|lrp{8cm}|} \hline
\multirow{3}{*}{\vspace*{15pt}\framebox{$\mathrm{OpenDef}$}}
& & \\
&\emph{Instance}: & A finite relational structure $\bm{A}$ and a relation $T$ over the domain of $\bm{A}$.\\

& \emph{Question}: & Is $T$ open-definable in $\bm{A}$?\\[1em] \hline
\end{tabular}
\end{center}

Let $f:\dom f\subseteq A\rightarrow A$ be a function. Given
$S\subseteq A^{m}$, we say that $f$ \emph{preserves} $S$ if for all
$\left\langle s_{1},\dots,s_{m}\right\rangle \in S\cap(\dom f)^{m}$ we
have $\left\langle fs_{1},\dots,fs_{m}\right\rangle \in S$. The
function $f$ is a \emph{subisomorphism} (subiso for short) of $\bm{A}$
provided that $f$ is injective, and both $f$ and $f^{-1}$ preserve
$R^{\bm{A}}$ for each $R\in\tau$. (Note that a subiso of $\bm{A}$ is
exactly an isomorphism between two substructures of $\bm{A}$.)  
We denote the set of all subisomorphisms of $\bm{A}$ by
$\si\bm{A}$. 

The following semantic characterization of open-definability is
central to our study.
\begin{thm}[{\cite[Thm 3.1]{lemas_semanticos}}]
\label{thm:lema sem}Let $\bm{A}$ be a finite relational structure
and $T\subseteq A^{m}$. The following are equivalent: 
\begin{enumerate}
\item\label{la} $T$ is open-definable in $\bm{A}$. 
\item\label{lb} $T$ is preserved by all subisomorphisms $\gamma$ of $\bm{A}$.
\item\label{lc} $T$ is preserved by all subisomorphisms $\gamma$ of $\bm{A}$ with $\left|\dom\gamma\right|\leq m$.
\end{enumerate}
\end{thm}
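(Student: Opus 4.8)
The plan is to prove the cycle of implications $(\ref{la}) \Rightarrow (\ref{lb}) \Rightarrow (\ref{lc}) \Rightarrow (\ref{la})$, since $(\ref{lc})$ is clearly a special case of $(\ref{lb})$ (the implication $(\ref{lb}) \Rightarrow (\ref{lc})$ is trivial, as restricting to subisos of small domain asks for less).

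For $(\ref{la}) \Rightarrow (\ref{lb})$, suppose $T$ is defined by an open formula $\varphi(x_1,\dots,x_m)$, and let $\gamma \in \si\bm{A}$ with $\bar a = \tup{a_1,\dots,a_m} \in T \cap (\dom\gamma)^m$. I want to show $\tup{\gamma a_1,\dots,\gamma a_m} \in T$, i.e.\ $\bm{A} \models \varphi[\gamma a_1,\dots,\gamma a_m]$. The key observation is that a subisomorphism, being an isomorphism between the substructures of $\bm{A}$ induced on $\dom\gamma$ and on its image, preserves the truth of every \emph{atomic} formula in both directions: for an equality atom this is exactly injectivity of $\gamma$, and for a relational atom $R(\bar v)$ this is precisely the condition that both $\gamma$ and $\gamma^{-1}$ preserve $R^{\bm A}$. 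Since $\varphi$ is a Boolean combination of atomic formulas and all the relevant elements lie in $\dom\gamma$, a routine induction on the structure of $\varphi$ shows that $\bm A \models \varphi[\bar a]$ iff $\bm A \models \varphi[\gamma\bar a]$, giving the conclusion. This direction is the easy soundness half.

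The substantive direction is $(\ref{lc}) \Rightarrow (\ref{la})$, where I assume $T$ is preserved by every subiso of domain size at most $m$ and must produce a defining open formula. The natural strategy is to build, for each tuple $\bar a = \tup{a_1,\dots,a_m} \in T$, a formula $\psi_{\bar a}(x_1,\dots,x_m)$ that captures the complete atomic type of $\bar a$: namely the conjunction of all atomic and negated-atomic formulas in the variables $x_1,\dots,x_m$ that hold of $\bar a$ in $\bm A$. Because the vocabulary is finite and relational and only $m$ variables are involved, each such type is a finite conjunction, so $\psi_{\bar a}$ is a genuine open formula, and $\bm A \models \psi_{\bar a}[\bar a]$ by construction. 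I then set $\varphi := \bigvee_{\bar a \in T} \psi_{\bar a}$, a finite disjunction since $A$ is finite, and claim it defines $T$. One inclusion is immediate: every $\bar a \in T$ satisfies its own disjunct. For the reverse inclusion, suppose $\bm A \models \varphi[\bar b]$; then $\bar b$ satisfies some $\psi_{\bar a}$ with $\bar a \in T$, which means $\bar a$ and $\bar b$ realize the same atomic type.

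The crux is then to convert ``$\bar a$ and $\bar b$ have the same atomic type'' into a subisomorphism witnessing $\bar b \in T$. I would define $\gamma$ on $\{a_1,\dots,a_m\}$ by $\gamma(a_i) = b_i$; the sameness of equality-type guarantees this is well defined and injective (namely $a_i = a_j \Leftrightarrow b_i = b_j$), and the sameness of relational atomic type guarantees that both $\gamma$ and $\gamma^{-1}$ preserve each $R^{\bm A}$ on the relevant tuples, so $\gamma \in \si\bm A$ with $|\dom\gamma| \le m$. By hypothesis $\gamma$ preserves $T$, and since $\bar a \in T \cap (\dom\gamma)^m$ we get $\gamma\bar a = \bar b \in T$, as desired. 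The main obstacle I anticipate is purely bookkeeping: handling repeated entries among $a_1,\dots,a_m$ carefully so that $\gamma$ is a well-defined function (the equality atoms in the type are exactly what make this work) and confirming that preservation of \emph{all} relational atoms on tuples drawn from $\dom\gamma$ — not just those appearing literally as $R(x_{i_1},\dots,x_{i_k})$ with the given variable pattern — follows from the completeness of the atomic type over the finitely many variables and relation symbols. Once that correspondence between atomic types and small-domain subisos is pinned down, the equivalence closes up.
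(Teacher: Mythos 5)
Your proof is correct, but it takes a genuinely different route from the paper's. The paper does not prove the equivalence of (\ref{la}) and (\ref{lb}) at all: that equivalence is imported wholesale from the reference cited in the theorem statement, and the proof actually given in the paper consists only of the trivial implication (\ref{lb}) $\Rightarrow$ (\ref{lc}) plus a one-line restriction argument for (\ref{lc}) $\Rightarrow$ (\ref{lb}): if $\gamma$ is any subiso and $\bar{a}\in T\cap(\dom\gamma)^{m}$, then $\gamma|_{\{a_{1},\dots,a_{m}\}}$ is again a subiso, now with domain of size at most $m$, so it already sends $\bar{a}$ into $T$. You instead close the full cycle (\ref{la}) $\Rightarrow$ (\ref{lb}) $\Rightarrow$ (\ref{lc}) $\Rightarrow$ (\ref{la}) from first principles: the soundness direction by induction on Boolean combinations of atoms (using that a subiso preserves atomic truth in both directions), and the substantive direction (\ref{lc}) $\Rightarrow$ (\ref{la}) by taking the defining formula to be the disjunction, over $\bar{a}\in T$, of the complete atomic type of $\bar{a}$, and then converting a same-type pair $(\bar{a},\bar{b})$ into the small-domain subiso $a_{i}\mapsto b_{i}$; your bookkeeping about repeated entries (handled by the equality atoms) and about arbitrary tuples from $\dom\gamma$ being instances of the finitely many atoms $R(x_{i_{1}},\dots,x_{i_{k}})$ is exactly right. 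In effect you have reproved the external theorem rather than used it, which buys a self-contained argument and an explicit (if exponentially large) defining formula --- indeed the same device, the type formulas $\Delta_{\bar{a}}$, that the paper deploys later in Section~\ref{sec:param} --- at the price of redoing work the paper delegates to a citation; note also that your direct proof of (\ref{lc}) $\Rightarrow$ (\ref{la}) makes the paper's restriction argument unnecessary, whereas the paper needs it precisely because it only imports (\ref{la}) $\Leftrightarrow$ (\ref{lb}). One trivial patch: when $T=\emptyset$ your disjunction is empty, so take the defining formula to be $\neg(x_{1}=x_{1})$ in that case.
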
 

\begin{proof}
  The equivalence of (\ref{la}) and (\ref{lb}) is proved in \cite[Thm
  3.1]{lemas_semanticos}.  Certainly (\ref{lb}) implies (\ref{lc}), so
  we show that (\ref{lc}) implies (\ref{lb}). Let $\gamma$ be a subiso
  of $\mathbf{A}$ and let
  $\left\langle a_{1},\dots,a_{m}\right\rangle \in T\cap(\dom\gamma)^m$.
  Note that the restriction $\gamma|_{\{a_{1},\dots,a_{m}\}}$ is a
  subiso of $\mathbf{A}$. Thus,
  $\gamma(\bar{a})=\gamma|_{\{a_{1},\dots,a_{m}\}}(\bar{a})\in T$.
\end{proof}

\subsection{Encodings and Sizes}\label{sec:size}

As is customary when considering complexity questions, the \emph{size}
of an object is the length of a string over a finite alphabet encoding
the object. We assume fixed encodings for vocabularies, relations,
structures and formulas, and define the size of these objects
according to these encodings. For a set $S$, let $|S|$ be the number
of elements in $S$, and for a relational vocabulary $\tau$, let $|\tau|$
be the number of relational symbols in $\tau$.

We write $\size(ob)$ to denote the size of an object $ob$. Even though
we do not specify the encodings we assume the following equalities
throughout this note. Let $\tau$ be a relational vocabulary,
$\mathbf{A}$ a $\tau$-structure, $T\subseteq A^{m}$ and $\varphi$ a
first-order formula.

\begin{itemize}
\item
  $\size(\tau)=(\left|\tau\right|+\sum_{R\in\tau}\ar(R))\log|\tau|$,\footnote{When
    an expression involving $\log x$ does note make sense, read it as
    $\max\{\log x,1\}$.}\\[-2em]
\item
  $\size(\mathbf{A})=\size(\tau)+(\left|A\right|+\sum_{R\in\tau}\ar(R)\left|R^{\mathbf{A}}\right|)\log|A|$,\\[-2em]
\item
  $\size(\mathbf{A},T)=(\size(\mathbf{A})+m\left|T\right|)\log|A|$,\\[-2em]
\item
  $\size_{\tau}(\varphi)=relcount(\varphi)\log|\tau|+varcount(\varphi)\log(var^{\#}(\varphi))$.
\end{itemize}

\noindent
Here $relcount(\varphi)$ {[}$varcount(\varphi)${]} stands for the
number of occurrences of relation symbols {[}variables{]} in $\varphi$,
and $var^{\#}(\varphi)$ is the number of different variables occurring
in $\varphi$. Since a formula $\varphi$ is a $\tau$-formula for every
$\tau$ containing the relation symbols in $\varphi$, the encoding of
$\varphi$ (and thus its size) depend on which vocabulary we have in
mind for $\varphi$. Another assumption we make on the encodings is
that determining whether $\bar{a}\in R^{\mathbf{A}}$ can be computed
in time $O(\size(\mathbf{A}))$.

\section{Classical Complexity of Open-Definability\label{sec:Classical-Complexity}}

In what follows a \emph{graph} is a model $\bm{G}$ of the vocabulary
$\tau_{\mathrm{GRAPH}}=\{E\}$, with $E$ binary, and such that $E^{\bm{G}}$
is symmetric and irreflexive. We provide a reduction from the following
problem to prove our hardness result.

\begin{center}
\begin{tabular}{|lrp{7cm}|} \hline
\multirow{2}{*}{\vspace*{3pt}\framebox{$\mathrm{InducedPath}$}}
& & \\
& \emph{Instance}:& A finite graph $\bm{G}$ and a positive
integer $k$.\\

& \emph{Question}:& Does $\bm{G}$ have a path of length $k$
as an induced subgraph (i.e., as a submodel)?\\[1em] \hline
\end{tabular}
\end{center}

\noindent
$\mathrm{InducedPath}$ is known to be $\mathrm{NP}$-complete (see, e.g.,~\cite{LibroNP}). 

\begin{thm}
\label{thm:OPENDEF is coNP complete}$\mathrm{OpenDef}$ is $\mathrm{coNP}$-complete.
\end{thm}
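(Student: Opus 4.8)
The plan is to prove membership in coNP and coNP-hardness separately, using the semantic characterization of Theorem~\ref{thm:lema sem} as the main tool for both directions.

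For membership in coNP, I would show that the complement of $\mathrm{OpenDef}$ is in NP. By the equivalence of (\ref{la}) and (\ref{lb}) in Theorem~\ref{thm:lema sem}, a target $T\subseteq A^m$ fails to be open-definable in $\bm{A}$ precisely when some subisomorphism $\gamma$ of $\bm{A}$ does not preserve $T$. This gives a natural polynomial-size certificate for non-definability: a tuple $\bar a=\tup{a_1,\dots,a_m}\in T$ together with a finite partial map $\gamma$ witnessing the failure. In fact, part (\ref{lc}) of the theorem is exactly what makes the certificate small: we may assume $|\dom\gamma|\leq m$, so the witness $\gamma$ is a partial injection on at most $m$ elements of $A$. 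A nondeterministic machine guesses such a $\gamma$ and a tuple $\bar a\in T\cap(\dom\gamma)^m$, then verifies in polynomial time that (i) $\gamma$ is injective, (ii) both $\gamma$ and $\gamma^{-1}$ preserve $R^{\bm A}$ for each $R\in\tau$ (here I would use the assumption that membership $\bar b\in R^{\bm A}$ is checkable in time $O(\size(\bm A))$), and (iii) $\gamma(\bar a)\notin T$. Since $m\leq\size(\bm A,T)$ and $\gamma$ has at most $m$ points, all these checks run in time polynomial in $\size(\bm A,T)$, so non-definability is in NP and hence $\mathrm{OpenDef}\in\mathrm{coNP}$.

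For coNP-hardness, I would reduce from the complement of $\mathrm{InducedPath}$, which is NP-complete and hence its complement is coNP-complete. Given an instance $\tup{\bm G,k}$ of $\mathrm{InducedPath}$, I would construct in polynomial time a structure $\bm A$ and a target relation $T$ such that $T$ is open-definable in $\bm A$ if and only if $\bm G$ has \emph{no} induced path of length $k$. The idea is to exploit the characterization again: $T$ is \emph{not} open-definable exactly when some subisomorphism of arity at most $m$ fails to preserve it, and a subisomorphism between two induced substructures of a graph that maps a fixed configuration to an induced path of length $k$ is precisely the object whose existence we want to detect. Concretely I would take $\bm A$ to be $\bm G$ (possibly augmented with a few auxiliary elements or relations to anchor the reduction) and choose $T$ to be a carefully chosen relation of arity roughly $k$ encoding a canonical path, so that a violating subisomorphism corresponds to an induced copy of the path $P_k$ in $\bm G$.

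The main obstacle will be engineering $T$ and the target arity so that the failures of preservation correspond \emph{exactly} to induced paths of length $k$, and nothing else. Spurious subisomorphisms that violate $T$ for unrelated reasons must be ruled out, which typically requires padding the construction so that the only partial isomorphisms moving the relevant tuples out of $T$ are those realizing an induced $P_k$; one must also ensure that the arity $m$ of $T$ stays linear in $k$ so the reduction is polynomial. Once the correspondence is established, combining it with Theorem~\ref{thm:lema sem} immediately yields that $\bm G$ has an induced path of length $k$ iff $T$ is not open-definable in $\bm A$, giving the reduction from co-$\mathrm{InducedPath}$ and completing the proof of coNP-hardness, and hence coNP-completeness.
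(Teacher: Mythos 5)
Your coNP-membership argument is correct and coincides with the paper's: by Theorem~\ref{thm:lema sem}, non-definability is witnessed by a subisomorphism $\gamma$ (which by item~(\ref{lc}) may be taken with $\left|\dom\gamma\right|\leq m$) together with a tuple $\bar{a}\in T\cap(\dom\gamma)^{m}$ such that $\gamma(\bar{a})\notin T$, and all the required checks run in time polynomial in $\size(\bm{A},T)$.

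The hardness direction, however, is a plan rather than a proof, and the part you defer as ``the main obstacle'' is exactly where the mathematical content lies. The paper's construction is concrete: given $(\bm{G},k)$ with $k=2l$, let $\bm{G}'$ be the \emph{disjoint} union of $\bm{G}$ with a fresh path on the vertices $\{-l,\dots,-1,1,\dots,l\}$, and let $T$ consist of \emph{both} orientations of that path, i.e.\ $T=\{\left\langle -l,\dots,-1,1,\dots,l\right\rangle ,\left\langle l,\dots,1,-1,\dots,-l\right\rangle \}$. Two details that you leave unresolved are essential. First, $T$ must contain the reversed tuple: the map $i\mapsto-i$ is a subisomorphism of $\bm{G}'$ (the adjoined path has a reversing automorphism), so a single-tuple target would fail to be preserved for \emph{every} input graph, and the reduction would be incorrect. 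Second, the ``spurious subisomorphisms'' you propose to eliminate by padding are in fact ruled out by connectivity together with disjointness: the image of the path tuple under any subisomorphism is an induced path of length $k$ in $\bm{G}'$, hence, being connected, lies entirely inside $\bm{G}$ or entirely inside the adjoined path; in the latter case it is one of the two tuples already in $T$. Consequently $T$ fails to be preserved by some subisomorphism if and only if $\bm{G}$ contains an induced path of length $k$, which combined with Theorem~\ref{thm:lema sem} gives precisely the equivalence your reduction needs. Without these two points your proposal does not yet establish that any particular construction works, so the hardness half of the theorem remains open in your write-up.
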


\begin{proof}
We first prove hardness. Fix an input graph $\bm{G}$ and
a positive integer $k$. We may assume that $G$ is disjoint with
the set of integers. Suppose first that $k=2l$. Let $\bm{G}'$ be
the graph with universe $G':=G\cup\{-l,\dots,-1,1,\dots,l\}$ and
with 
\[
E^{\bm{G}'}:=E^{\bm{G}}\cup\{\left\langle a,b\right\rangle \in\{-l,\dots,-1,1,\dots,l\}^{2}:\left|a-b\right|=1\}\cup\{\left\langle -1,1\right\rangle ,\left\langle 1,-1\right\rangle \}.
\]
That is, $\bm{G}'$ is the disjoint union of $\bm{G}$ and a path
of length $k$. Define 
\[
T:=\{\left\langle -l,\dots,-1,1,\dots,l\right\rangle ,\left\langle l,\dots,1,-1,\dots,-l\right\rangle \}.
\]
Now, observe that by Theorem~\ref{thm:lema sem} we have that
$\mathrm{OpenDef}$ returns $\mathrm{FALSE}$ on input $(\bm{G}',T)$ if
and only if $\mathrm{InducedPath}$ returns $\mathrm{TRUE}$ on input
$(\bm{G},k)$. The case where $k$ is odd is analogous.

Showing that $\mathrm{OpenDef}$ is in $\mathrm{coNP}$ is a straightforward
application of Theorem~\ref{thm:lema sem}. Given a finite relational
structure $\bm{A}$ and $T\subseteq A^{k}$, the fact that $\mathrm{OpenDef}$
returns $\mathrm{FALSE}$ on input $(\bm{A},T)$ is witnessed by a
bijection $\gamma$ between subsets of $A$ satisfying conditions
easily checked in poly-time with respect to the size of $(\bm{A},T)$.
\end{proof}
Given a relational signature $\tau$ let $\mathrm{OpenDef}[\tau]$
be the restriction of $\mathrm{OpenDef}$ to input structures of signature
$\tau$. In view of the proof of Theorem \ref{thm:OPENDEF is coNP complete}
we have the following.

\begin{cor}
$\mathrm{OpenDef}[\tau_{\mathrm{GRAPH}}]$ is $\mathrm{coNP}$-complete.
\end{cor}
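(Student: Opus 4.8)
The plan is to extract the corollary directly from the two halves of the proof of Theorem~\ref{thm:OPENDEF is coNP complete}, observing that neither half requires anything beyond the graph vocabulary.

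First I would dispatch membership. Since $\mathrm{OpenDef}[\tau_{\mathrm{GRAPH}}]$ is merely the restriction of $\mathrm{OpenDef}$ to inputs whose structure is a graph, any coNP decision procedure for $\mathrm{OpenDef}$ also decides $\mathrm{OpenDef}[\tau_{\mathrm{GRAPH}}]$ on its restricted set of instances. Hence membership in coNP is inherited for free, and no new argument is needed here.

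The substance is in hardness, and the key point is that the reduction already built in that proof lands inside the class of graphs. Concretely, I would check that the structure $\bm{G}'$ produced from an instance $(\bm{G},k)$ of $\mathrm{InducedPath}$ is a genuine $\tau_{\mathrm{GRAPH}}$-structure. Its edge relation $E^{\bm{G}'}$ is the union of $E^{\bm{G}}$ (symmetric and irreflexive because $\bm{G}$ is a graph) with the sets of pairs $\langle a,b\rangle$ satisfying $|a-b|=1$ together with $\langle -1,1\rangle$ and $\langle 1,-1\rangle$. Each added block is closed under swapping coordinates, so $E^{\bm{G}'}$ is symmetric; and since $|a-b|=1$ and $-1\neq 1$ force distinct endpoints, it is irreflexive. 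Thus $\bm{G}'$ is a graph, the target $T$ is a relation over its domain, and the very same equivalence (via Theorem~\ref{thm:lema sem}) shows that $\mathrm{OpenDef}[\tau_{\mathrm{GRAPH}}]$ returns $\mathrm{FALSE}$ on $(\bm{G}',T)$ exactly when $\mathrm{InducedPath}$ returns $\mathrm{TRUE}$ on $(\bm{G},k)$. Treating the odd-$k$ case by the analogous construction completes the reduction $\mathrm{InducedPath}\leq_{p}\mathrm{OpenDef}[\tau_{\mathrm{GRAPH}}]$.

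There is no real obstacle beyond this bookkeeping: the only thing to be careful about is confirming that the padding path and the two connecting edges respect symmetry and irreflexivity, so that the output of the reduction never leaves $\tau_{\mathrm{GRAPH}}$. Once that is verified, coNP-hardness together with the inherited coNP-membership yields coNP-completeness.
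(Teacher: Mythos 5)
Your proposal is correct and matches the paper's intended argument: the paper derives this corollary precisely by noting that the hardness reduction in the proof of Theorem~\ref{thm:OPENDEF is coNP complete} already outputs a $\tau_{\mathrm{GRAPH}}$-structure (you helpfully spell out the symmetry/irreflexivity check that the paper leaves implicit), while coNP-membership is inherited from the unrestricted problem.
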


\section{Parameterized Complexity of
  $\mathrm{OpenDef}$\label{sec:param}}

\emph{Parameterized complexity} is a mathematical framework that
allows for a more fine-grained analysis of the computational costs of
a problem than classical complexity. In a \emph{parameterization }of a
(classical) problem we single out a specific part of the input of the
problem to try and understand how this part affects the computational
cost.  For example, a parameterization of propositional $\mathrm{SAT}$
(i.e., the satisfiability problem for Propositional Logic) could be
the number of variables in the input formula.

We begin with the basic definitions involved. There are slight
discrepancies for these definitions in the literature, but our results
remain valid regardless of which of the versions is used. We follow
the account in~\cite{FlumGroheBook}. As is customary, (classical)
decision problems are formalized as languages over finite nonempty
alphabets. Let $\Sigma\neq\emptyset$ be a finite alphabet.
\begin{itemize}
\item A \emph{parameterization} of $\Sigma^{*}$ is a mapping $\kappa:\Sigma^{*}\rightarrow\mathbb{N}$
that is polynomial time computable.
\item A \emph{parameterized (or parametric) problem} (over $\Sigma$) is
a pair $(Q,\kappa)$ consisting of a set $Q\subseteq\Sigma^{*}$ of
strings over $\Sigma$ and a parameterization $\kappa$ of $\Sigma^{*}$.
\end{itemize}
We consider the following parameterization\footnote{This parameterized
  problem (and others appearing below) is presented in an informal
  way, but it should be clear how to cast them in the form of our formal
  definition. } of $\mathrm{OpenDef}$:

\medskip{}

\begin{center}
\begin{tabular}{|lrp{7cm}|} \hline
\multirow{4}{*}{\vspace*{25pt}\framebox{$p$\mhyphen$\mathrm{OpenDef}$}}
& & \\
&\emph{Instance}: & A finite relational structure $\bm{A}$
and $T\subseteq A^{m}$.\\

& \emph{Parameter}: & $m\left|T\right|$.\\

& \emph{Question}: & Is $T$ open-definable in $\bm{A}$?\\[1em] \hline
\end{tabular}
\end{center}

For a positive integer $k$, the $k$-th \emph{slice} of a parameterized
problem $(Q,\kappa)$ is the restriction of the problem to all
instances $x\in\Sigma^{*}$ such that $\kappa(x)=k$. Let
$p\mhyphen\mathrm{OpenDef}_{k}$ denote the $k$-th slice of
$p\mhyphen\mathrm{OpenDef}$.

\begin{prop}
  \label{prop:opendef slice}$p\mhyphen\mathrm{OpenDef}_{k}$ is
  computable in time $k!n^{2k}p(n)$ where $n$ is the size of the input
  and $p(X)$ a polynomial.
\end{prop}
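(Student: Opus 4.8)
The plan is to give a brute-force decision procedure whose correctness rests on the semantic characterization of Theorem~\ref{thm:lema sem}, and then bound its running time on the $k$-th slice. By the equivalence of (\ref{la}) and (\ref{lc}) in that theorem, $T\subseteq A^{m}$ is open-definable in $\mathbf A$ if and only if every subisomorphism $\gamma$ of $\mathbf A$ with $|\dom\gamma|\le m$ preserves $T$. Hence the algorithm I would run is: enumerate every partial injective map $\gamma$ on $A$ whose domain has size at most $m$; discard those that are not subisomorphisms; and for each surviving $\gamma$ test whether it preserves $T$. The procedure returns $\mathrm{TRUE}$ (definable) exactly when no enumerated subisomorphism violates preservation, and $\mathrm{FALSE}$ as soon as one does, so its correctness is immediate from clause (\ref{lc}).

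Next I would bound the size of the enumeration. On the $k$-th slice we have $m|T|=k$; disposing of the degenerate case $T=\emptyset$ separately (the empty relation is always open-definable, e.g.\ by $\neg(x_{1}=x_{1})$), we may assume $|T|\ge 1$ and therefore $m\le m|T|=k$. A partial injection with domain of size $j$ is determined by a $j$-element domain together with an injective assignment of images, so the number of partial injections with $|\dom\gamma|\le m$ is at most $\sum_{j=0}^{m}\binom{|A|}{j}\,|A|(|A|-1)\cdots(|A|-j+1)\le (m+1)\,|A|^{2m}\le (k+1)\,n^{2k}$, where $n=\size(\mathbf A,T)\ge|A|$ is the input size; this is comfortably below $k!\,n^{2k}$.

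Then I would account for the per-candidate cost. Deciding whether a fixed $\gamma$ is a subisomorphism amounts to checking that $\gamma$ and $\gamma^{-1}$ preserve each $R^{\mathbf A}$; under the encoding assumption that membership $\bar a\in R^{\mathbf A}$ is testable in time $O(\size(\mathbf A))$, this is polynomial in $n$. Deciding whether such a $\gamma$ preserves $T$ requires, for each of the at most $|T|$ tuples of $T$ lying in $(\dom\gamma)^{m}$, a single membership test in $T$, again polynomial in $n$. Multiplying the enumeration bound $(k+1)n^{2k}$ by this polynomial per-candidate cost and absorbing the constant factor yields a total running time of the form $k!\,n^{2k}p(n)$ for a suitable polynomial $p$, as required.

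I do not expect a genuine obstacle here: the argument is an exhaustive search justified by Theorem~\ref{thm:lema sem}(\ref{lc}). The only points demanding care are the reduction of the range of $\gamma$ to domains of size at most $m$ --- which is precisely what clause (\ref{lc}) buys us, and what ties the exponent to $m\le k$ rather than to $|A|$ --- the clean counting of partial injections that produces the $k!\,n^{2k}$ factor, and the bookkeeping ensuring that every per-candidate test is polynomial in $n$ under the stated encoding conventions.
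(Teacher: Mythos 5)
Your proof is correct and follows essentially the same route as the paper's: enumerate all partial injections on $A$ with domain of size at most $m$, use Theorem~\ref{thm:lema sem}(\ref{lc}) for correctness, and bound the enumeration by roughly $k!\,n^{2k}$ times a polynomial per-candidate check. The only differences are cosmetic (you sum over domain sizes at once and treat $T=\emptyset$ separately, while the paper counts each size $l\le m$ via $l!\binom{|A|}{l}^{2}$ and sums the per-level polynomials), so there is nothing to fix.
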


\begin{proof}
Fix a finite relational structure $\bm{A}$ and $T\subseteq A^{m}$.
Let $n$ be the size of $(\bm{A},T)$ and $k=m\left|T\right|$. Given
a positive integer $l$ let 
\[
\si_{l}\bm{A}:=\{\gamma\in\si\bm{A}:\left|\dom\gamma\right|=l\}.
\]
Note that Corollary~\ref{thm:lema sem} implies that $T$ is open-definable
in $\bm{A}$ if and only if every $\gamma\in(\bigcup_{l\leq m}\si_{l}\bm{A})$
preserves $T$. We show that the right-hand side of this equivalence
can be checked in polynomial time. Let 
\[
\mathcal{I}_{l}:=\{\gamma:\text{there are }B,B'\subseteq A\text{ such that }\left|B\right|=l\text{ and }\gamma:B\rightarrow B'\text{ is bijective}\}.
\]
Observe that 
\[
\left|\mathcal{I}_{l}\right|=l!\cdot\binom{\left|A\right|}{l}^{2},
\]
since there are $l!$ bijections between any two subsets of size $l$
of $A$ , and so 
\[
\left|\mathcal{I}_{l}\right|\leq l!\cdot\left|A\right|^{2l}\leq k!\cdot n^{2k}.
\]
Now for each $\gamma\in\mathcal{I}_{l}$ we have to check: 
\begin{enumerate}
\item if $\gamma\in\si\bm{A}$, 
\item and in that case, if $\gamma$ preserves $T$. 
\end{enumerate}
Clearly both these tasks can be carried out in time bounded by a polynomial
in $n$. If $p_{l}(n)$ is such a polynomial, then checking if every
member of $\si_{l}\bm{A}$ preserves $T$ takes at most $\left|\mathcal{I}_{l}\right|\cdot p_{l}(n)$
steps. Thus, the computation for $\mathrm{OpenDef}_{k}$ on input
$(\bm{A},T)$ can be done in at most 
\[
k!n^{2k}\sum_{l=1}^{k}p_{l}(n)
\]
steps.
\end{proof}

A parameterized problem $(Q,\kappa)$ over the alphabet $\Sigma$
is \emph{fixed parameter tractable} (FPT) if there is an algorithm
$\mathcal{A}$ together with a polynomial $p(X)$ and a computable
function $f:\mathbb{N}\rightarrow\mathbb{N}$ such that $\mathcal{A}$
decides if $x\in Q$ in time $f(\kappa(x))p(\left|x\right|)$ for
a all $x\in\Sigma^{*}$. The class $\mathrm{FPT}$ of all fixed parameter
tractable problems plays the role $\mathrm{P}$ plays in classical
complexity.

Even though each slice of $p\mhyphen\mathrm{OpenDef}$ can be computed
in polynomial time, the bound given by Proposition~\ref{prop:opendef
  slice} does not imply
$p\mhyphen\mathrm{OpenDef}\in\mathrm{FPT}$, since the parameter
appears as an exponent of the size of the input.  In fact, as we shall
see below it is unlikely that $p\mhyphen\mathrm{OpenDef}$ is FPT,
since it is hard for the class $\mathrm{coW}[1]$; a class of
parameterized problems believed to be strictly larger than
$\mathrm{FPT}$.  But before we can discuss hardness of parameterized
problems we need an adequate notion of reduction.

\begin{defn}
  \label{def:fpt-reduction}Let $(Q,\kappa)$ and $(Q',\kappa')$ be
  parameterized problems over the alphabets $\Sigma$ and $\Sigma^{*}$,
  respectively. An \emph{fpt-reduction} from $(Q,\kappa)$ to
  $(Q',\kappa')$ is a mapping
  $R:\Sigma^{*}\rightarrow(\Sigma')^{*}$such that:
\begin{enumerate}
\item For all $x\in\Sigma^{*}$ we have $(x\in Q\Leftrightarrow R(x)\in Q')$.
\item There is a computable function $f$ and a polynomial $p(X)$ such
that $R(x)$ is computable in time $f(\kappa(x))\cdot p(x)$.
\item There is a computable function $g:\mathbb{N}\rightarrow\mathbb{N}$
such that $\kappa(R(x))\leq g(\kappa(x))$ for all $x\in\Sigma^{*}$.
\end{enumerate}
If $P$ and $P'$ are parameterized problems, we write $P\leq^{\textrm{fpt}}P'$
if there is an fpt-reduction from $P$ to $P'$, and write $P\equiv^{\mathrm{fpt}}P'$
if there are fpt-reductions in both directions.
\end{defn}

There are other notions of fpt-reduction, such as Turing
fpt-reductions~\cite{FlumGroheBook}, involving oracles. All
fpt-reductions in this note satisfy
Definition~\ref{def:fpt-reduction}, and thus we simply use the name
fpt-reduction for them.

To analyze the complexity of parametric problems which appear not to
be tractable, Downey and Fellows introduced the $\mathrm{W}$ hierarchy
\cite{Downey_Fellows_W_hierarchy}. The classes 
\[
\mathrm{FPT}\subseteq\mathrm{W}[1]\subseteq\mathrm{W}[2]\subseteq\dots\subseteq\mathrm{W}[P]
\]
in this hierarchy are closed under fpt-reductions and are believed to
be all different. They have many natural complete problems (see,
e.g.,~\cite{DowneyFellowsBook,FlumGroheBook}).  One can think of this hierarchy as
analogous to the polynomial hierarchy in classical complexity. We only
consider the classes $\mathrm{W}[1]$ and $\mathrm{W}[P]$ in the
sequel. Their formal definitions are somewhat involved and not needed
in our arguments, so we do not include them here. (The interested
reader can find them in \cite{FlumGroheBook}.)  We do need the
following characterization of $\mathrm{W}[P]$, analogous to the
characterization of $\mathrm{NP}$ in terms of ``certificates''.

\begin{lem}[{\cite[Lem 3.8]{FlumGroheBook}}]
  A parameterized problem $(Q,\kappa)$ over the alphabet $\Sigma$ is
  in $\mathrm{W}[P]$ if and only if there are computable functions
  $f,h:\mathbb{N}\rightarrow\mathbb{N}$, a polynomial $p(X)$, and a
  $Y\subseteq\Sigma^{*}\times\{0,1\}^{*}$ such that:
  \begin{enumerate}
  \item For all $(x,y)\in\Sigma^{*}\times\{0,1\}^{*}$ it is decidable
    in time $f(\kappa(x))p(\left|x\right|)$ whether $(x,y)\in Y$.
  \item For all $(x,y)\in\Sigma^{*}\times\{0,1\}^{*}$, if $(x,y)\in Y$
    then $\left|y\right|=h(\kappa(x))\log\left|x\right|$.
  \item For every $x\in\Sigma^{*}$ we have $x\in Q$ iff there exists
    $y\in\{0,1\}^{*}$ such that $(x,y)\in Y$.
  \end{enumerate}
\end{lem}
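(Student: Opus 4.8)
The plan is to reduce the statement to the standard machine characterization of $\mathrm{W}[P]$ by \emph{bounded nondeterminism}: a parameterized problem $(Q,\kappa)$ lies in $\mathrm{W}[P]$ precisely when it is decided by a nondeterministic algorithm that, on input $x$, runs in time $f(\kappa(x))\,p(|x|)$ and performs at most $h(\kappa(x))\log|x|$ nondeterministic steps. This is the parameterized analogue of the definition of $\mathrm{NP}$ by bounded nondeterminism, and it is exactly the bridge that converts a nondeterministic computation into a short certificate. Taking this characterization as the starting point, both directions of the equivalence are obtained by passing back and forth between the nondeterministic bits used in an accepting computation and the certificate string $y$, with the delicate point being the precise length bound in condition~(2).

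For the direction from $\mathrm{W}[P]$ to certificates, I would fix a nondeterministic machine $M$ deciding $Q$ within the above bounds and define $Y$ to be the set of pairs $(x,y)$ such that $y$ encodes an accepting sequence of nondeterministic choices of $M$ on input $x$. Fixing the nondeterministic steps of $M$ according to $y$ makes the computation deterministic, so deciding membership in $Y$ costs time $f(\kappa(x))\,p(|x|)$, yielding condition~(1); since $M$ uses at most $h(\kappa(x))\log|x|$ nondeterministic steps, each accepting $y$ has length at most $h(\kappa(x))\log|x|$, which after padding to an exact equality gives condition~(2); and condition~(3) is merely the definition of acceptance by $M$.

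For the converse, given $f$, $h$, $p$ and $Y$ as in the statement, I would construct a nondeterministic machine that on input $x$ first guesses a string $y$ of length $h(\kappa(x))\log|x|$, consuming that many nondeterministic steps, and then deterministically tests $(x,y)\in Y$ in time $f(\kappa(x))\,p(|x|)$. By condition~(3) this machine decides $Q$, by condition~(1) it respects the fpt time bound, and by condition~(2) it stays within the prescribed nondeterministic-step budget; the machine characterization then places $(Q,\kappa)$ in $\mathrm{W}[P]$.

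The main obstacle is not the shuttling between certificates and nondeterministic bits, which is essentially bookkeeping, but rather the precise $\log|x|$ factor in the length bound: it is this logarithmic, as opposed to polynomial, certificate length that distinguishes $\mathrm{W}[P]$ from the larger class of problems decidable nondeterministically in fpt time. Care is therefore needed to ensure that the number of nondeterministic steps of the machine, the encoding length of a choice sequence, and the prescribed length $h(\kappa(x))\log|x|$ all agree up to the constant factors that can be absorbed into $h$, and that the padding used to turn the inequality $|y|\le h(\kappa(x))\log|x|$ into the equality demanded by condition~(2) disturbs neither condition~(1) nor condition~(3).
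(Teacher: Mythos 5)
The paper does not prove this lemma at all: it is imported verbatim from Flum and Grohe's book, so there is no in-paper argument to compare yours against. Your proof is correct and is essentially the proof found in that reference, because Flum and Grohe \emph{define} $\mathrm{W}[P]$ by exactly the machine characterization you take as your starting point: $(Q,\kappa)\in\mathrm{W}[P]$ iff it is decided by a $\kappa$-restricted nondeterministic Turing machine, i.e., one running in time $f(\kappa(x))\,p(|x|)$ and making at most $h(\kappa(x))\log|x|$ nondeterministic binary steps. Given that definition, the lemma is precisely the bookkeeping you describe (certificate bits corresponding to nondeterministic choices, plus padding so that the certificate length is an exact function of $\kappa(x)$ and $|x|$), and your handling of the delicate points is sound: the guessing machine must first compute $h(\kappa(x))\log|x|$, which is fine since $\kappa$ is polynomial-time computable and $h$ is computable; constant factors in the number of choice bits are absorbed into $h$; and padding disturbs neither condition (1) nor condition (3). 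The one caveat worth recording is that your argument is only as strong as its starting point: had $\mathrm{W}[P]$ been introduced via weighted circuit satisfiability (as in Downey and Fellows), the machine characterization would itself be a substantive theorem, and your proposal would be a proof relative to that theorem rather than a proof from the definition. Relative to the reference the paper actually cites, your route is the standard one.
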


The \emph{complement} $(Q,\kappa)^{\mathsf{C}}$ of a parametric problem
$(Q,\kappa)$ is the parametric problem $(Q\backslash\Sigma^{*},\kappa)$.
It follows directly from the definitions that $P_{1}\leq^{\textrm{fpt}}P_{2}$
iff $P_{1}^{\mathsf{C}}\leq^{\textrm{fpt}}P_{2}^{\mathsf{C}}$. For
a class $\mathrm{K}$ of parametric problems, let $\mathrm{coK}$
denote the class off all parametric problems whose complement is in
$\mathrm{K}$.

\begin{prop}
$p\mhyphen\mathrm{OpenDef}\in\mathrm{coW}[P]$.
\end{prop}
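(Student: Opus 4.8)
The plan is to prove the equivalent statement that the complement $(p\mhyphen\mathrm{OpenDef})^{\mathsf{C}}$ lies in $\mathrm{W}[P]$, and to establish this via the certificate characterization of $\mathrm{W}[P]$ stated above. By condition (\ref{lc}) of Theorem~\ref{thm:lema sem}, a relation $T\subseteq A^{m}$ fails to be open-definable in $\bm{A}$ if and only if there is a subisomorphism $\gamma\in\si\bm{A}$ with $\left|\dom\gamma\right|\le m$ that does not preserve $T$. Thus a natural certificate for a positive instance of the complement problem is precisely such a $\gamma$.

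Concretely, I would take $Y$ to consist of the pairs $(x,y)$ where $x$ encodes an instance $(\bm{A},T)$ with $T\subseteq A^{m}$ and $y$ encodes (after stripping padding) an injective map $\gamma\colon B\to A$ with $B\subseteq A$, $\left|B\right|\le m$, such that $\gamma\in\si\bm{A}$ and $\gamma$ does not preserve $T$. Since $\gamma$ is determined by its list of $\left|B\right|\le m$ input/output pairs and each element of $A$ is encoded using $O(\log\left|A\right|)$ bits, $\gamma$ can be written using $O(m\log\left|A\right|)$ bits. Because $\left|A\right|\le\size(\bm{A},T)=\left|x\right|$ we have $\log\left|A\right|\le\log\left|x\right|$, and because the parameter is $\kappa(x)=m\left|T\right|$ we have $m\le m\left|T\right|=\kappa(x)$ whenever $T\ne\emptyset$; when $T=\emptyset$ the relation is trivially open-definable (e.g.\ by $x_{1}\ne x_{1}$), so $x$ is not in the complement problem and no certificate is needed. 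Hence the certificate length is bounded by $h(\kappa(x))\log\left|x\right|$ for a suitable computable $h$ (e.g.\ $h(k)=ck$), and by padding every member of $Y$ to exactly that length we meet the length requirement.

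It then remains to verify the other two conditions. For decidability of $(x,y)\in Y$ in the required time: after decoding $\gamma$, checking injectivity, checking that $\gamma$ and $\gamma^{-1}$ preserve each $R^{\bm{A}}$, and checking that $\gamma$ fails to preserve $T$ all reduce to iterating over the tuples of the relations $R^{\bm{A}}$ and of $T$ and performing membership tests, each running in time polynomial in $\left|x\right|$ by the encoding assumptions of Section~\ref{sec:size}; in fact $f$ can be taken constant. The final condition --- that $x$ belongs to the complement problem if and only if some $y$ with $(x,y)\in Y$ exists --- is exactly the reformulation of non-definability via condition (\ref{lc}) of Theorem~\ref{thm:lema sem}. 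This gives $(p\mhyphen\mathrm{OpenDef})^{\mathsf{C}}\in\mathrm{W}[P]$, hence $p\mhyphen\mathrm{OpenDef}\in\mathrm{coW}[P]$.

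The only genuinely delicate point, and the step I expect to require the most care, is the certificate-length bound: $\mathrm{W}[P]$ demands witnesses of length $h(\kappa(x))\log\left|x\right|$ rather than merely polynomial length, so it is essential that the domain of the witnessing subisomorphism be bounded by the parameter (through $m\le m\left|T\right|$, together with the separate treatment of $T=\emptyset$) and that each domain/image element cost only $O(\log\left|x\right|)$ bits. Everything else --- the preservation and non-preservation checks and the membership reformulation --- is routine given Theorem~\ref{thm:lema sem} and the size conventions.
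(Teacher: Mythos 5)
Your proposal is correct and follows essentially the same route as the paper: both use the certificate characterization of $\mathrm{W}[P]$ with the witness being a subisomorphism $\gamma$ of $\bm{A}$ on at most $m$ elements failing to preserve $T$ (guaranteed to exist by Theorem~\ref{thm:lema sem}), encoded in $O(m\log|A|)$ bits and verified in polynomial time. Your treatment is in fact somewhat more careful than the paper's, which silently skips the $T=\emptyset$ edge case and the exact-length requirement on certificates that you handle by padding.
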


\begin{proof}
By Theorem \ref{thm:lema sem} we know that $T\subseteq A^{m}$ fails
to be open-definable in $\mathbf{A}$ if and only if there is a bijection
$\gamma$ between subsets of $A$ of cardinality at most $m$, satisfying
that $\gamma$ is a subisomorphism of $\mathbf{A}$ and does not preserve
$T$. Such a bijection can be encoded as a binary string of length
$O(m^{2}\log\left|A\right|)$, and we can compute in time polynomial
in $\size(\mathbf{A},T$) if $\gamma$ is a subisomorphism of $\mathbf{A}$
not preserving $T$. (Note that this is a refinement of the argument
we used in Theorem \ref{thm:OPENDEF is coNP complete} to prove $\mathrm{OpenDef}\in\mathrm{coNP}$.)
\end{proof}

Next we establish a lower bound for the complexity of
$p\mhyphen\mathrm{OpenDef}$ by a reduction from the following
parameterized version of the Clique problem.

\begin{center}
\begin{tabular}{|lrp{8cm}|} \hline
\multirow{3}{*}{\vspace*{15pt}\framebox{$p\mhyphen\mathrm{Clique}$}}
& & \\
&\emph{Instance}: & A finite graph $\bm{G}$ and a positive
integer $k$.\\

&\emph{Parameter}:& $k$.\\

& \emph{Question}:& Does $\bm{G}$ have clique of size $k$?\\[1em] \hline
\end{tabular}
\end{center}

It is proved in \cite[Cor 3.2]{DOWNEY1995_Clique_W1_Complete} that
$p\mhyphen\mathrm{Clique}$ is complete (under fpt-reductions) for
the class $\mathrm{W}[1]$.
\begin{lem}
\label{lem:pclique se reduce a podc}$p\mhyphen\mathrm{Clique}\leq^{\textrm{fpt}}p\mhyphen\mathrm{OpenDef}^{\mathsf{C}}$;
hence $p\mhyphen\mathrm{OpenDef}$ is hard for $\mathrm{coW}[1]$.
\end{lem}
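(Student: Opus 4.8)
The plan is to build an fpt-reduction $R$ that sends an instance $(\bm{G},k)$ of $p\mhyphen\mathrm{Clique}$ to an instance $(\bm{A},T)$ of $p\mhyphen\mathrm{OpenDef}$ in such a way that $\bm{G}$ has a $k$-clique if and only if $T$ is \emph{not} open-definable in $\bm{A}$. By Theorem~\ref{thm:lema sem}, non-definability of $T$ is equivalent to the existence of a subisomorphism $\gamma$ of $\bm{A}$ (with $\left|\dom\gamma\right|$ small) that fails to preserve $T$. So the whole design reduces to engineering $\bm{A}$ and $T$ so that the subisomorphisms violating $T$ correspond exactly to $k$-cliques of $\bm{G}$. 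The natural strategy, mirroring the coNP-hardness construction in Theorem~\ref{thm:OPENDEF is coNP complete}, is to let $\bm{A}$ be (a modest modification of) $\bm{G}$ together with a fixed ``gadget'' $k$-clique placed on a disjoint set of new vertices, and to choose $T$ so that the only way a subiso can move the gadget tuple out of $T$ is by mapping the gadget isomorphically onto $k$ mutually adjacent vertices of $\bm{G}$ --- i.e.\ onto a $k$-clique.

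Concretely, I would take new elements $c_{1},\dots,c_{k}$ disjoint from $G$, form the complete graph on them, and let $\bm{A}$ be the disjoint union of $\bm{G}$ with this $k$-clique gadget. I would then set
\[
T:=\{\tup{c_{1},\dots,c_{k}}\}\cup\{\bar{a}\in A^{k}:\bar{a}\text{ is not an injective listing of a }k\text{-clique of }\bm{A}\},
\]
or, more cleanly, choose $T$ to be the set of all $k$-tuples \emph{except} those that list a $k$-clique, so that a subiso breaks $T$ precisely when it carries the gadget tuple $\tup{c_{1},\dots,c_{k}}\in T$ to an injective listing of a $k$-clique lying inside $\bm{G}$, which would then sit outside $T$. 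Because the gadget vertices induce a complete graph, any subiso defined on $\{c_{1},\dots,c_{k}\}$ must send them to $k$ pairwise-adjacent, distinct vertices; checking that $\gamma^{-1}$ preserves $E$ as well forces the image to be an \emph{induced} clique, which for a simple graph is the same as a clique. The verification that this $\bm{A},T$ makes the biconditional hold is then a direct appeal to condition~(\ref{lc}) of Theorem~\ref{thm:lema sem}.

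The three conditions of Definition~\ref{def:fpt-reduction} are then checked in turn. The equivalence (condition~1) is the content of the construction just described. For the running time (condition~2), note that $\bm{A}$ is built in time polynomial in $\size(\bm{G})$ plus $k$, and the only delicate point is the size of $T$: if $T$ is defined by the complement it may be large, so I would prefer to realize the same separating effect with a small explicit $T$ (ideally a single tuple together with a few companions) to keep $\size(\bm{A},T)$ polynomial. Condition~3, the parameter bound, requires $m\left|T\right|\leq g(k)$ for the produced instance; with $m=k$ and $\left|T\right|$ bounded by a function of $k$ this is immediate. I expect the main obstacle to be exactly this tension in condition~2 and~3: the separating relation $T$ must simultaneously (a) force subisomorphisms to land on cliques and (b) have $m\left|T\right|$ bounded purely in terms of $k$, independent of $\left|G\right|$. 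Getting a succinct $T$ that does the job --- rather than the naive exponential complement --- is where the real care lies, and it is what turns a plain coNP-hardness argument into a genuine $\mathrm{coW}[1]$-hardness reduction.
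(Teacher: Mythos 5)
Your choice of structure, $\bm{A}=\bm{G}\sqcup\bm{K}_{k}$, is exactly the paper's, but both of your candidate target relations fail, and they fail precisely at the point you flag at the end as ``where the real care lies'' --- so the key idea of the reduction is missing. Take your first candidate, $T=\{\tup{c_{1},\dots,c_{k}}\}\cup\{\bar{a}\in A^{k}:\bar{a}\text{ is not an injective listing of a }k\text{-clique}\}$. The map $c_{i}\mapsto c_{\sigma(i)}$, for $\sigma$ any non-identity permutation of $\{1,\dots,k\}$, is a subisomorphism of $\bm{A}$ (an automorphism of the gadget), and it sends $\tup{c_{1},\dots,c_{k}}\in T$ to $\tup{c_{\sigma(1)},\dots,c_{\sigma(k)}}$, a clique listing distinct from the gadget tuple and hence outside $T$. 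By Theorem~\ref{thm:lema sem}, $T$ is therefore \emph{never} open-definable, whether or not $\bm{G}$ has a $k$-clique, so condition~1 of Definition~\ref{def:fpt-reduction} fails. Your ``cleaner'' second candidate, $T=A^{k}\setminus\{\text{clique listings}\}$, fails in the opposite direction: that set is defined outright by the open formula $\neg\bigl(\bigwedge_{i<j}x_{i}\neq x_{j}\wedge\bigwedge_{i<j}E(x_{i},x_{j})\bigr)$, so it is \emph{always} open-definable (equivalently: the set of clique listings is closed under subisomorphisms, and so is its complement, since subisomorphisms are invertible). On top of this, both candidates have $|T|=\Theta(|A|^{k})$, so the produced parameter $m|T|$ is not bounded by any $g(k)$ and the relation $T$ cannot even be written down in fpt time --- defects you acknowledge but do not repair.

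The repair is a symmetrization you never quite reach: take $T_{k}=\{\tup{c_{\sigma(1)},\dots,c_{\sigma(k)}}:\sigma\text{ a permutation of }\{1,\dots,k\}\}$, i.e., \emph{all} $k!$ orderings of the gadget clique and nothing else. This is the paper's choice. Now $T_{k}$ is closed under the gadget's automorphisms, so the spurious subisomorphisms above preserve it; and for $k\geq2$, pairwise adjacency forces the subisomorphic image of any tuple of $T_{k}$ to lie either inside the gadget (where, being $k$ distinct vertices among $k$, it is again a permutation and stays in $T_{k}$) or entirely inside $\bm{G}$, where it is an injective listing of $k$ pairwise-adjacent vertices. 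Hence some subisomorphism breaks $T_{k}$ if and only if $\bm{G}$ has a $k$-clique, which by Theorem~\ref{thm:lema sem} is the required equivalence. Moreover $|T_{k}|=k!$, so the new parameter is $k\cdot k!\leq g(k)$ and the instance is computable in time $O(k!\,k\log k)$ plus a polynomial in $\size(\bm{G})$ --- exactly what an fpt-reduction (though, as the paper notes, not a polynomial reduction) permits.
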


\begin{proof}
The idea is the same as in the proof of Theorem \ref{thm:OPENDEF is coNP complete}.
Given an input $\mathbf{G},k$ for $p\mhyphen\mathrm{Clique}$ the
reduction computes the input $\mathbf{G}\sqcup\mathbf{K}_{k},T_{k}$
for $p\mhyphen\mathrm{OpenDef}^{\mathsf{C}}$, where $\mathbf{G}\sqcup\mathbf{K}_{k}$
is the disjoint union of $\mathbf{G}$ with the complete graph on
the vertices $\{1,\dots,k\}$, and $T_{k}=\{(\sigma(1),\dots,\sigma(k)):\sigma\text{ a permutation of }\{1,\dots,k\}\}$.
It is easy to see that this is an fpt-reduction, and that $\mathbf{G}$
has a clique of size $k$ iff $T_{k}$ is not open-definable in $\mathbf{G}\sqcup\mathbf{K}_{k}$.
(Note that this is \emph{not }a polynomial reduction.)
\end{proof}

In contrast with our analysis of the classical complexity of
$\mathrm{OpenDef}$, we were not able to show that
$p\mhyphen\mathrm{OpenDef}$ is in $\mathrm{coW}[1]$.  However when we
fix the vocabulary we can establish a sharp upper bound. For a
vocabulary $\tau$ let $p\mhyphen\mathrm{OpenDef}[\tau]$ denote the
restriction of $p\mhyphen\mathrm{OpenDef}$ to input structures with
vocabulary $\tau$, and let
$p\mhyphen\mathrm{OpenDef}^{\mathsf{C}}[\tau]$ denote the complement
of this problem. Before we start on the upper bounds, we have the
following consequence of Theorem \ref{lem:pclique se reduce a podc}.

\begin{cor}
  \label{cor:.pclique reduce a pODc=00005Bt=00005D}For every
  vocabulary $\tau$ with at least one at least binary relation we have
  $p\mhyphen\mathrm{Clique}\leq^{\textrm{fpt}}p\mhyphen\mathrm{OpenDef}^{\mathsf{C}}[\tau]$;
  and hence $p\mhyphen\mathrm{OpenDef}[\tau]$ is hard for
  $\mathrm{coW}[1]$.
\end{cor}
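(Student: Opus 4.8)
The plan is to generalize the reduction of Lemma~\ref{lem:pclique se reduce a podc} from the graph vocabulary $\tau_{\mathrm{GRAPH}}$ to an arbitrary vocabulary $\tau$ containing a relation symbol $R$ of arity $r\geq 2$. The core idea remains the same: given an input $(\bm{G},k)$ for $p\mhyphen\mathrm{Clique}$, I build a $\tau$-structure $\bm{B}$ and a target relation $T_k$ so that $\bm{G}$ has a $k$-clique if and only if some subisomorphism defined on a set of size $\leq k$ fails to preserve $T_k$, which by Theorem~\ref{thm:lema sem} is equivalent to $T_k$ \emph{not} being open-definable in $\bm{B}$.

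First I would encode the graph $\bm{G}$ into the single binary ``slot'' of $R$. Concretely, fix any two coordinate positions, say the first two, of $R$ and pad the remaining $r-2$ positions with a fresh distinguished element $\ast$ added to the domain; interpret $R^{\bm{B}}$ on the $\bm{G}$-part as $\{\tup{a,b,\ast,\dots,\ast} : \tup{a,b}\in E^{\bm{G}}\}$, and interpret all other symbols of $\tau$ as empty (or in some fixed trivial way) so that they impose no constraints. I would then form the disjoint union with a ``clique gadget'' $\bm{K}_k$ on vertices $\{1,\dots,k\}$, realized through $R$ in the same padded fashion, encoding the complete graph; set $T_k=\{(\sigma(1),\dots,\sigma(k)) : \sigma \text{ a permutation of } \{1,\dots,k\}\}$ exactly as before. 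The padding element $\ast$ must be handled carefully so that it does not create spurious subisomorphisms, but since $\ast$ can be made the unique element lying in a position no other element occupies, it is rigid and causes no trouble.

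The verification then runs parallel to Lemma~\ref{lem:pclique se reduce a podc}. A permutation of $\{1,\dots,k\}$ is trivially a subisomorphism of the clique gadget and preserves $T_k$ vacuously; the interesting subisomorphisms are those mapping $\{1,\dots,k\}$ onto a $k$-subset of $\bm{G}$'s vertices. Such a map is a subisomorphism of $\bm{B}$ precisely when it sends the complete graph bijectively onto an induced complete subgraph of $\bm{G}$, i.e.\ onto a $k$-clique, and when it exists it sends some tuple of $T_k$ outside $T_k$, witnessing non-definability. Conversely, a failure of open-definability yields, via Theorem~\ref{thm:lema sem}\eqref{lc}, a subisomorphism on $\leq k$ elements not preserving $T_k$, and tracing through the structure forces it to arise from a $k$-clique in $\bm{G}$. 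That the map is an fpt-reduction is immediate: $\bm{B}$ and $T_k$ are computable in time polynomial in the size of $(\bm{G},k)$ (the parameter $k$ only controls the gadget), and the new parameter $m|T_k|=k\cdot k!$ depends only on $k$, satisfying condition~(3) of Definition~\ref{def:fpt-reduction}.

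The main obstacle I anticipate is ensuring that enlarging the arity and adding the extra symbols of $\tau$ does not accidentally enrich or impoverish the supply of subisomorphisms in a way that breaks the equivalence. In the pure graph case the only relation is $E$; here I must check that interpreting the surplus coordinates of $R$ with a rigid padding element, and the surplus symbols of $\tau$ trivially, leaves the subisomorphism structure on the relevant elements isomorphic to the graph case, so that the clean ``clique iff non-preserving subiso'' correspondence survives. Once this bookkeeping is settled, hardness of $p\mhyphen\mathrm{OpenDef}[\tau]$ for $\mathrm{coW}[1]$ follows immediately from the $\mathrm{W}[1]$-completeness of $p\mhyphen\mathrm{Clique}$ together with the equivalence $P_1\leq^{\textrm{fpt}}P_2 \iff P_1^{\mathsf{C}}\leq^{\textrm{fpt}}P_2^{\mathsf{C}}$ noted before Proposition~\ref{prop:opendef slice}.
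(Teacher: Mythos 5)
Your overall strategy (a direct fpt-reduction from $p\mhyphen\mathrm{Clique}$, reusing the clique gadget and the target $T_k$ from Lemma~\ref{lem:pclique se reduce a podc}) is sound, and it is essentially a one-step version of the paper's two-step proof; but your padding scheme breaks the argument whenever the arity $r$ of $R$ is strictly greater than $2$. If every tuple of $R^{\bm{B}}$ carries the fresh element $\ast$ in coordinates $3,\dots,r$, then any injective partial map $\gamma$ whose domain and range avoid $\ast$ satisfies $R^{\bm{B}}\cap(\dom\gamma)^{r}=\emptyset$ and $R^{\bm{B}}\cap(\mathrm{ran}\,\gamma)^{r}=\emptyset$, so the preservation conditions for $\gamma$ and $\gamma^{-1}$ hold \emph{vacuously}: every such map is a subisomorphism of $\bm{B}$. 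In particular, sending $1,\dots,k$ to any $k$ distinct vertices of $\bm{G}$ (clique or not) is a subisomorphism carrying $\tup{1,\dots,k}\in T_k$ outside $T_k$, so by Theorem~\ref{thm:lema sem} your reduction declares $T_k$ non-definable --- i.e., answers ``clique exists'' --- for every input graph with at least $k$ vertices. Note that the obstacle is not the one you anticipated: the rigidity of $\ast$ is irrelevant, because the failure comes from subisomorphisms that simply omit $\ast$ from their domain and are therefore never constrained by $R^{\bm{B}}$ at all.

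The fix is to pad by repeating a coordinate rather than by introducing a fresh element: encode an edge $\tup{a,b}$ as $\tup{a,b,b,\dots,b}\in R^{\bm{B}}$ (interpreting all other symbols of $\tau$ as empty). Then an injective partial map $\gamma$ on the vertex set preserves $R^{\bm{B}}$ if and only if it preserves the edge relation, and likewise for $\gamma^{-1}$, so the subisomorphisms of $\bm{B}$ coincide exactly with those of the underlying graph and the correctness argument of Lemma~\ref{lem:pclique se reduce a podc} goes through verbatim. This repaired construction is precisely the content of the paper's terser proof, which first observes $p\mhyphen\mathrm{OpenDef}^{\mathsf{C}}[\tau_{\mathrm{GRAPH}}]\leq^{\textrm{fpt}}p\mhyphen\mathrm{OpenDef}^{\mathsf{C}}[\tau]$ (the ``easy to see'' step, whose substance is this repeated-coordinate embedding) and then composes with $p\mhyphen\mathrm{Clique}\leq^{\textrm{fpt}}p\mhyphen\mathrm{OpenDef}^{\mathsf{C}}[\tau_{\mathrm{GRAPH}}]$, which is what the proof of Lemma~\ref{lem:pclique se reduce a podc} actually establishes.
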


\begin{proof}
If $\tau$ has at least one at least binary relation, it is easy to
see that 
\[
p\mhyphen\mathrm{OpenDef}^{\mathsf{C}}[\tau_{\mathrm{GRAPH}}]\leq^{\textrm{fpt}}p\mhyphen\mathrm{OpenDef}^{\mathsf{C}}[\tau].
\]
From the proof of Lemma~\ref{lem:pclique se reduce a podc} it follows 
that $p\mhyphen\mathrm{Clique}\leq^{\textrm{fpt}}p\mhyphen\mathrm{OpenDef}^{\mathsf{C}}[\tau_{\mathrm{GRAPH}}]$.
\end{proof}

We turn now to establishing an upper bound for
$p\mhyphen\mathrm{OpenDef}[\tau]$.  Recall that a sentence is
existential if it has the form
$\exists v_{1}\dots\exists v_{l}\,\alpha(v_{1},\dots,v_{l})$ where
$\alpha$ is open. Let $\Sigma_1[\tau]$ be the set of all existential
sentences over a vocabulary $\tau$. For a given vocabulary $\tau$,
consider the following parameterized model checking problem:

\begin{center}
\begin{tabular}{|lrp{7cm}|} \hline
\multirow{4}{*}{\vspace*{25pt}\framebox{$p\mhyphen\mathrm{MC}(\Sigma_{1}[\tau])$}}
& & \\

& \emph{Instance}: & A finite $\tau$-structure $\mathbf{A}$
and an existential $\tau$-sentence $\varphi$.\\

& \emph{Parameter}:& $\size_{\tau}(\varphi)$.\\

& \emph{Question}: & Does $\mathbf{A}$ satisfy $\varphi$?\\[1em] \hline
\end{tabular}
\end{center}

Recall that
$\size_{\tau}(\varphi)=relcount(\varphi)\log|\tau|+varcount(\varphi)\log(var^{\#}(\varphi))$.
It is proved in~\cite{FlumGrohePaper} that
$p\mhyphen\mathrm{MC}(\Sigma_{1}[\tau])$ is in $\mathrm{W}[1]$ for all
$\tau$.

\begin{thm}
\label{thm:pODc=00005Bt=00005D reduce a pMCE=00005Bt=00005D}For every
vocabulary $\tau$, $p\mhyphen\mathrm{OpenDef}^{\mathsf{C}}[\tau]\leq^{\textrm{fpt}}p\mhyphen\mathrm{MC}(\Sigma_{1}[\tau])$.
Thus, $p\mhyphen\mathrm{OpenDef}[\tau]\in\mathrm{coW}[1]$.
\end{thm}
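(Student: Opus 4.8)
The plan is to give an fpt-reduction that, on input $(\mathbf A,T)$ with $T\subseteq A^{m}$, leaves the structure $\mathbf A$ unchanged and outputs an existential $\tau$-sentence $\varphi$ with $\mathbf A\models\varphi$ iff $T$ is \emph{not} open-definable in $\mathbf A$. The starting point is to repackage Theorem~\ref{thm:lema sem}. Restricting a bad subisomorphism to the coordinates of a witnessing tuple (exactly as in the proof of that theorem) shows that $T$ fails to be open-definable iff there are tuples $\bar a\in T$ and $\bar b\in A^{m}\setminus T$ having the same quantifier-free type in $\mathbf A$ --- that is, the same equality pattern among coordinates and, for every $R\in\tau$ and every index sequence, the same truth value of the corresponding atom. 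Indeed, such a pair is precisely an isomorphism $a_{i}\mapsto b_{i}$ between the induced substructures that moves a tuple of $T$ outside $T$, which is a subiso not preserving $T$.

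For each tuple $\bar a\in T$ let $\theta_{\bar a}(y_{1},\dots,y_{m})$ be the open formula recording its quantifier-free type: the conjunction of the literals $y_{i}=y_{j}$ or $y_{i}\neq y_{j}$ according to whether $a_{i}=a_{j}$, together with $R(y_{i_{1}},\dots,y_{i_{r}})$ or its negation according to whether $\tup{a_{i_{1}},\dots,a_{i_{r}}}\in R^{\mathbf A}$, ranging over all $R\in\tau$ and all index sequences. Writing $r=\max_{R\in\tau}\ar(R)$, for fixed $\tau$ the formula $\theta_{\bar a}$ has $O(m^{r})$ atoms and is computable in time polynomial in $\size(\mathbf A,T)$ by testing each atom in $\mathbf A$.

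\textbf{The main obstacle} is that the naive sentence $\bigvee_{\bar a\in T}\exists\bar y\,(\theta_{\bar a}(\bar y)\wedge\bar y\notin T)$ is unavailable: the clause $\bar y\notin T$ refers neither to a $\tau$-relation nor to namable constants, and one cannot pin the coordinates of the $T$-tuples down by existential $\tau$-formulas. I bypass this with a counting argument. Group the tuples of $T$ by quantifier-free type; let $\rho$ range over the (at most $|T|$) types so realized and let $c_{\rho}$ be the number of tuples of $T$ of type $\rho$. By the reformulation above, $T$ fails to be open-definable iff for some such $\rho$ the structure $\mathbf A$ contains strictly more than $c_{\rho}$ distinct $m$-tuples of type $\rho$ --- equivalently, at least one type-$\rho$ tuple lies outside $T$. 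The key point is that ``$\mathbf A$ has at least $c_{\rho}+1$ distinct tuples of type $\rho$'' is itself existential: take blocks $\bar y^{(1)},\dots,\bar y^{(c_{\rho}+1)}$ of $m$ variables, assert $\theta_{\rho}$ of each block, and assert the blocks are pairwise distinct as tuples. Thus set
\[
\varphi\;=\;\bigvee_{\rho}\ \exists\bar y^{(1)}\cdots\exists\bar y^{(c_{\rho}+1)}\Big(\bigwedge_{p=1}^{c_{\rho}+1}\theta_{\rho}(\bar y^{(p)})\ \wedge\ \bigwedge_{1\le p<q\le c_{\rho}+1}\bigvee_{i=1}^{m}y^{(p)}_{i}\neq y^{(q)}_{i}\Big).
\]
Distributing existentials over the outer disjunction (the blocks being disjoint) rewrites $\varphi$ as a single sentence in $\Sigma_{1}[\tau]$, and by the pigeonhole reading $\mathbf A\models\varphi$ iff $T$ is not open-definable.

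Finally I would verify the three clauses of Definition~\ref{def:fpt-reduction}. Correctness is the displayed equivalence. The sentence is built in time polynomial in $\size(\mathbf A,T)$ (compute the types of all $T$-tuples, the multiplicities $c_{\rho}$, and the formulas $\theta_{\rho}$). For the parameter, note $\sum_{\rho}c_{\rho}=|T|$ and the number of types is at most $|T|$, so $\varphi$ uses at most $2m|T|$ variables and, $\tau$ being fixed, $O(m|T|\cdot m^{r})$ occurrences of relation symbols; hence $\size_{\tau}(\varphi)$ is bounded by a computable (indeed polynomial) function of the parameter $m|T|$, as required. This establishes $p\mhyphen\mathrm{OpenDef}^{\mathsf{C}}[\tau]\leq^{\textrm{fpt}}p\mhyphen\mathrm{MC}(\Sigma_{1}[\tau])$; combined with the cited fact $p\mhyphen\mathrm{MC}(\Sigma_{1}[\tau])\in\mathrm{W}[1]$ and the closure of $\mathrm{W}[1]$ under fpt-reductions, we obtain $p\mhyphen\mathrm{OpenDef}^{\mathsf{C}}[\tau]\in\mathrm{W}[1]$, that is, $p\mhyphen\mathrm{OpenDef}[\tau]\in\mathrm{coW}[1]$.
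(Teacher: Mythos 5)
Your proof is correct and takes essentially the same route as the paper's: encode the quantifier-free type of each tuple of $T$ as an open $\tau$-formula, and use a pigeonhole over pairwise-distinct $m$-tuples to express existentially that some tuple outside $T$ realizes a type occurring in $T$, then check the fpt-reduction conditions. The only differences are cosmetic: the paper uses a single block of $|T|+1$ distinct tuples each satisfying the disjunction of all types (rather than your per-type counts $c_{\rho}+1$ disjoined over $\rho$), and your type formulas additionally record the equality pattern among coordinates, a detail the paper's $\Delta_{\bar{a}}$ leaves implicit.
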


\begin{proof}
Fix a $\tau$-structure $\mathbf{A}$ and $T\subseteq A^{m}$. For
each tuple $\bar{a}=\left\langle a_{1},\dots,a_{m}\right\rangle \in A^{m}$,
and each $R\in\tau$ let $\Delta_{\bar{a},R}(x_{1},\dots,x_{m})$
be the the conjunction of the following set of atomic formulas
\[
\{R(x_{i_{1}},\dots,x_{i_{r}}):(a_{i_{1}},\dots,a_{i_{r}})\in R^{\mathbf{A}}\}\cup\{\neg R(x_{i_{1}},\dots,x_{i_{r}}):(a_{i_{1}},\dots,a_{i_{r}})\notin R^{\mathbf{A}}\},
\]
where $r$ is the arity of $R$. Observe that 
\begin{align*}
\size(\Delta_{\bar{a},R}) & =m^{r}\log\left|\tau\right|+rm^{r}\log m\\
 & \leq q_{r}(m)
\end{align*}
for a suitable polynomial $q_{r}(X)$. Also observe that $\Delta_{\bar{a},R}$
can be computed in time $O(m^{r}\size(\mathbf{A})+\size(\Delta_{\bar{a},R}))$,
so there is a polynomial $p_{r}(X)$ such that the computation of
$\Delta_{\bar{a},R}$ can be done in at most $p_{r}(m)\size(\mathbf{A})$
steps. Next, define 
\[
\Delta_{\bar{a}}(x_{1},\dots,x_{m}):=\bigwedge\{\Delta_{\bar{a},R}(x_{1},\dots,x_{m}):R\in\tau\}.
\]
Let $\rho$ be the greatest among the arities of the relations in
$\tau$. Then, $\size(\Delta_{\bar{a}})\leq\left|\tau\right|p_{\rho}(m)$,
and $\Delta_{\bar{a}}$ is computable in time bounded by $\left|\tau\right|p_{\rho}(m)\size(\mathbf{A})$.
Note that $\Delta_{\bar{a}}$ characterizes the isomorphism type of
$\bar{a}$ in $\mathbf{A}$, i.e., for all $\bar{b}\in A^{m}$ we
have
\[
\mathbf{A}\vDash\Delta_{\bar{a}}[b_{1},\dots,b_{m}]\Longleftrightarrow\bar{a}\mapsto\bar{b}\text{ is a subisomorphism of }\mathbf{A}.
\]
Now, let

\[
\Delta_{T}(x_{1},\dots,x_{m}):=\bigvee\{\Delta_{\bar{a}}(\bar{x}):\bar{a}\in T\},
\]
and take $\varphi_{\mathbf{A},T}$ as the sentence

\[
\exists\bar{x}_{1}\dots\exists\bar{x}_{t+1}\,\,\bigwedge\{\bar{x}_{i}\neq\bar{x}_{j}:1\leq i<j\leq t+1\}\wedge\bigwedge\{\Delta_{T}(\bar{x}_{i}):1\leq i\leq t+1\},
\]
where $t$ is the number of tuples in $T$. It is straightforward
to check that there are polynomials $p(X)$ and $q(X)$ such that:
\begin{enumerate}
\item[(1)] $\varphi_{\mathbf{A},T}$ can be computed in time $p(m\left|T\right|)\size(\mathbf{A})$,
\end{enumerate}
and
\begin{enumerate}
\item[(2)] $\size(\varphi_{\mathbf{A},T})\leq q(m\left|T\right|).$
\end{enumerate}
Next, note that $\mathbf{A}\vDash\Delta_{T}[\bar{b}]$ if and only
if $\bar{b}$ has the same isomorphism type as some tuple in $T$;
thus $\varphi_{\mathbf{A},T}$ asserts that there are $t+1$ distinct
$m$-tuples such that each one has the same isomorphism type as some
tuple in $T$. So, $\mathbf{A}\vDash\varphi_{\mathbf{A},T}$ if and
only if there are $\bar{a}\in T$ and $\bar{b}\in A^{m}\setminus T$
such that $\bar{a}\mapsto\bar{b}$ is a subisomorphism of $\mathbf{A}$.
By Theorem \ref{thm:lema sem}, this says that:
\begin{enumerate}
\item[(3)]  $\mathbf{A}\vDash\varphi_{\mathbf{A},T}$ iff $T$ is not open-definable
in $\mathbf{A}$. 
\end{enumerate}

To conclude, observe that (1-3) guarantee that the transformation
$\mathbf{A},T\rightsquigarrow\mathbf{A},\varphi_{\mathbf{A},T}$ is an
fpt-reduction from $p\mhyphen\mathrm{OpenDef}^{\mathsf{C}}[\tau]$ to
$p\mhyphen\mathrm{MC}(\Sigma_{1}[\tau])$. (Notably, it is also a
polynomial many-one reduction.)
\end{proof}

It is worth to note that, by Corollary, Theorem
\ref{thm:pODc=00005Bt=00005D reduce a pMCE=00005Bt=00005D} actually
holds with $\equiv^{\mathrm{fpt}}$ in place of $\le^{\mathrm{fpt}}$.
That is,
\[
p\mhyphen\mathrm{OpenDef}^{\mathsf{C}}[\tau]\equiv^{\mathrm{fpt}}p\mhyphen\mathrm{MC}(\Sigma_{1}[\tau])
\]
 for every vocabulary $\tau$ with at least one at least  binary relation.

Combining the upper an lower bounds found above we obtain the main
result of this section.

\begin{thm}
$p\mhyphen\mathrm{OpenDef}[\tau]$ is $\mathrm{coW}[1]$-complete
for every vocabulary $\tau$ with at least one at least binary relation.
In particular, $p\mhyphen\mathrm{OpenDef}[\tau_{\mathrm{GRAPH}}]$
is $\mathrm{coW}[1]$-complete.
\end{thm}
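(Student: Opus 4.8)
The plan is to obtain $\mathrm{coW}[1]$-completeness by assembling separately the upper and lower bounds already established in this section, relying only on the closure of $\mathrm{W}[1]$ under fpt-reductions and the elementary equivalence recorded earlier that $P_{1}\leq^{\textrm{fpt}}P_{2}$ iff $P_{1}^{\mathsf{C}}\leq^{\textrm{fpt}}P_{2}^{\mathsf{C}}$. No new construction is needed; the work is to combine the pieces and keep the complements straight.

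For membership I would start from Theorem~\ref{thm:pODc=00005Bt=00005D reduce a pMCE=00005Bt=00005D}, which gives $p\mhyphen\mathrm{OpenDef}^{\mathsf{C}}[\tau]\leq^{\textrm{fpt}}p\mhyphen\mathrm{MC}(\Sigma_{1}[\tau])$ for every $\tau$. Since $p\mhyphen\mathrm{MC}(\Sigma_{1}[\tau])\in\mathrm{W}[1]$ by~\cite{FlumGrohePaper}, and $\mathrm{W}[1]$ is closed under fpt-reductions, it follows that $p\mhyphen\mathrm{OpenDef}^{\mathsf{C}}[\tau]\in\mathrm{W}[1]$. By the definition of $\mathrm{coK}$ this is exactly the assertion $p\mhyphen\mathrm{OpenDef}[\tau]\in\mathrm{coW}[1]$.

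For hardness I would invoke Corollary~\ref{cor:.pclique reduce a pODc=00005Bt=00005D}: for every $\tau$ with at least one at least binary relation we have $p\mhyphen\mathrm{Clique}\leq^{\textrm{fpt}}p\mhyphen\mathrm{OpenDef}^{\mathsf{C}}[\tau]$. Since $p\mhyphen\mathrm{Clique}$ is $\mathrm{W}[1]$-complete~\cite{DOWNEY1995_Clique_W1_Complete} and fpt-reducibility is transitive, $p\mhyphen\mathrm{OpenDef}^{\mathsf{C}}[\tau]$ is $\mathrm{W}[1]$-hard. To transfer this to the target problem, take any $Q\in\mathrm{coW}[1]$; then $Q^{\mathsf{C}}\in\mathrm{W}[1]$, so $Q^{\mathsf{C}}\leq^{\textrm{fpt}}p\mhyphen\mathrm{OpenDef}^{\mathsf{C}}[\tau]$, and applying the complement equivalence yields $Q\leq^{\textrm{fpt}}p\mhyphen\mathrm{OpenDef}[\tau]$. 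Hence $p\mhyphen\mathrm{OpenDef}[\tau]$ is $\mathrm{coW}[1]$-hard, and together with membership it is $\mathrm{coW}[1]$-complete. The final sentence about $\tau_{\mathrm{GRAPH}}$ is then immediate, since $\{E\}$ with $E$ binary satisfies the hypothesis.

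As all the substantive reductions have already been built, no step here is genuinely hard; the one point that demands care is the bookkeeping with complements, namely ensuring that $\mathrm{W}[1]$-hardness of $p\mhyphen\mathrm{OpenDef}^{\mathsf{C}}[\tau]$ delivers $\mathrm{coW}[1]$-hardness of $p\mhyphen\mathrm{OpenDef}[\tau]$ itself and not of its complement. This is precisely the place where the equivalence $P_{1}\leq^{\textrm{fpt}}P_{2}\Leftrightarrow P_{1}^{\mathsf{C}}\leq^{\textrm{fpt}}P_{2}^{\mathsf{C}}$ is applied, and I would state that step explicitly to avoid any sign confusion.
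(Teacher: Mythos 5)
Your proposal is correct and follows exactly the paper's proof, which simply combines Corollary~\ref{cor:.pclique reduce a pODc=00005Bt=00005D} (for $\mathrm{coW}[1]$-hardness) with Theorem~\ref{thm:pODc=00005Bt=00005D reduce a pMCE=00005Bt=00005D} (for membership in $\mathrm{coW}[1]$). The explicit bookkeeping you supply with the complement equivalence $P_{1}\leq^{\textrm{fpt}}P_{2}\Leftrightarrow P_{1}^{\mathsf{C}}\leq^{\textrm{fpt}}P_{2}^{\mathsf{C}}$ is exactly the (implicit) content of the paper's one-line proof, spelled out correctly.
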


\begin{proof}
Combine Corollary \ref{cor:.pclique reduce a pODc=00005Bt=00005D}
and Theorem \ref{thm:pODc=00005Bt=00005D reduce a pMCE=00005Bt=00005D}.
\end{proof}

\section{The Length of Open Formulas\label{sec:LargoFormulas}}

In previous sections we show that the definability problem for open
formulas is not tractable. We now discuss the \emph{size} of formulas
required in definitions. In particular, we show that it is not possible
to polynomially bound the size of an open formula required in the
definition of a relation in a given structure. More formally, we construct
a sequence $\{(\mathbf{A}_{n},T_{n}):n\in\mathbb{N}\}$ whose size
grows polynomially in $n$, and such that the smallest definition
of $T_{n}$ in $\mathbf{A}_{n}$ is exponentially large in $n$. This
does not come as a surprise though; a polynomial bound on the size
of defining open formulas would entail that $\mathrm{OpenDef}$ is
in $\mathrm{NP}$ and, since we know $\mathrm{OpenDef}$ is $\mathrm{coNP}$-complete
(Theorem \ref{thm:OPENDEF is coNP complete}), we would have $\mathrm{coNP}\subseteq\mathrm{NP}$.
\begin{thm}
For each $n\geq3$ there are a finite relational structure $\mathbf{A}_{n}$
and an $n^{2}$-ary relation $T_{n}$ over $A_{n}$ such that:
\begin{itemize}
\item $\size(\mathbf{A}_{n},T_{n})=O(n^{3}\log n)$, 
\item $T_{n}$ is open-definable in $\mathbf{A}_{n}$, and
\item every open formula defining $T_{n}$ in $\mathbf{A}_{n}$ has at least
$(n-1)^{n}$ literals.
\end{itemize}
\end{thm}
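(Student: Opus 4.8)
The plan is to exhibit a structure that is small yet \emph{rigid enough} that a target relation consisting of very few tuples (so as to respect the size bound) can nonetheless be carved out only by a formula mentioning exponentially many atoms. Since the size constraint $\size(\mathbf{A}_{n},T_{n})=O(n^{3}\log n)$ forces $|T_{n}|=O(n)$, the target cannot be large; all the difficulty must be pushed into the \emph{length} of any defining formula rather than the number of tuples it accepts. I would take $T_{n}$ to be a single subisomorphism type, i.e.\ $T_{n}=\{\bar{b}\in A_{n}^{n^{2}}:\bar{a}\mapsto\bar{b}\text{ is a subisomorphism}\}$ for one carefully chosen tuple $\bar{a}$ with pairwise distinct entries. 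Such a type class is automatically closed under subisomorphisms (a composition of subisomorphisms is again one), hence open-definable by Theorem~\ref{thm:lema sem}; arranging $\mathbf{A}_{n}$ to be essentially rigid keeps the class small. The vocabulary should contain a single relation of arity $\rho=\Theta(n)$: this is forced, since the number of distinct relational atoms on $n^{2}$ variables is $(n^{2})^{\rho}$, and only an arity growing with $n$ makes this exceed $(n-1)^{n}$.

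The lower bound I would prove by a \emph{sensitivity} (pivotal-atom) argument. Call an atom $\alpha$ \emph{pivotal} if there are tuples $\bar{b},\bar{c}\in A_{n}^{n^{2}}$ that agree on the truth value of every atom except $\alpha$, with $\bar{b}\in T_{n}$ and $\bar{c}\notin T_{n}$. Because any open formula $\varphi$ is a Boolean combination of atoms, if $\alpha$ did not occur in $\varphi$ then $\varphi$ could not distinguish $\bar{b}$ from $\bar{c}$; hence every pivotal atom occurs in every formula defining $T_{n}$, and any such formula has at least as many literals as there are pivotal atoms. The whole argument then reduces to producing $(n-1)^{n}$ distinct pivotal atoms.

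To produce them I would index the witnesses by the $(n-1)^{n}$ functions $f\colon\{1,\dots,n\}\to\{1,\dots,n-1\}$ (equivalently, by one ``off-diagonal'' choice per row of an $n\times n$ grid), taking $\bar{b}=\bar{a}$ throughout and letting $\bar{c}_{f}$ be a tuple that differs from $\bar{a}$ in exactly one relational atom $\alpha_{f}$, with $f\mapsto\alpha_{f}$ injective. Concretely, $\bar{c}_{f}$ should read the same as $\bar{a}$ on every atom except that the single $n$-tuple of variables selected by $f$ is made to fall on (or off) a distinguished fact of $R$; since $\bar{a}\mapsto\bar{c}_{f}$ then fails to preserve $R$ in exactly one place, it is not a subisomorphism, so $\bar{c}_{f}\notin T_{n}$, witnessing that $\alpha_{f}$ is pivotal. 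Verifying the size bound and open-definability is then routine bookkeeping.

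The main obstacle is exactly the construction of these $(n-1)^{n}$ \emph{single-atom} perturbations under the tight size budget. There is a genuine tension: to keep $|T_{n}|=O(n)$ the structure must be nearly rigid, which forbids nontrivial automorphisms and hence moving elements around freely; yet to make an atom pivotal one needs a second tuple $\bar{c}_{f}$ differing from $\bar{a}$ in a \emph{single} atom, and a careless perturbation (e.g.\ relocating one grid element) disturbs every atom in which that element participates. Resolving this requires designing the high-arity relation $R$ so that its facts are sparse (only polynomially many, to respect $\size(\mathbf{A}_{n})$) while still admitting, for each of the exponentially many choices $f$, a local reconfiguration that toggles precisely one $R$-fact and leaves all equality atoms and all other relational atoms intact. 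Pinning down this gadget --- and proving both that it keeps the type class of $\bar{a}$ small and that the induced atoms $\alpha_{f}$ are pairwise distinct --- is the crux of the proof; once it is in place the counting $\#\{\text{literals}\}\ge\#\{\text{pivotal atoms}\}\ge(n-1)^{n}$ is immediate.
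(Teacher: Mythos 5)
Your plan reproduces the architecture of the paper's proof: a vocabulary with a single relation of arity $n$ (your observation that bounded arity cannot supply $(n-1)^{n}$ distinct atoms over $n^{2}$ variables is exactly why the paper uses arity $n$), a target of constant size that is a singleton subisomorphism type, and a lower bound obtained by exhibiting, for each of the $(n-1)^{n}$ row-selection functions, an atom that must occur in every defining formula because some pair of tuples---one in $T_{n}$, one outside---is separated by that atom alone. The counting shell (every such ``pivotal'' atom occurs, the atoms are pairwise distinct, hence at least $(n-1)^{n}$ literals) is sound and is the same as the paper's.

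The genuine gap is that you never build anything: the structure $\mathbf{A}_{n}$, the relation $R$, the tuple $\bar{a}$, and the perturbed tuples $\bar{c}_{f}$ are all deferred as ``the crux,'' and that crux is the entire mathematical content of the theorem---nothing in your proposal shows such a gadget exists within the $O(n^{3}\log n)$ size budget. For comparison, the paper takes $A$ to be four disjoint $n$-blocks $\{a_{i}\},\{b_{i}\},\{c_{i}\},\{\ast_{i}\}$; $M_{1}$ is the $n\times n$ matrix with first column $\bar{a}$ and padding $\ast_{1}$; for $j\geq2$, $M_{j}$ has first column $\bar{b}$, $j$-th column $\bar{c}$, and padding $\ast_{j}$; $R$ consists of $\bar{a},\bar{b},\bar{c}$ and all rows of $M_{1},\dots,M_{n}$; $T_{n}=\{M_{1}\}$; the witness for a selection $\bar{j}\in\{2,\dots,n\}^{n}$ is the mixture $M^{\bar{j}}$ whose $i$-th row is the $i$-th row of $M_{j_{i}}$, and the pivotal atom is $\lambda_{\bar{j}}=R(x_{1j_{1}},\dots,x_{nj_{n}})$. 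I would stress that the difficulty you isolate---witness pairs agreeing on \emph{every} atom, equalities included, except one relational atom---is real and delicate rather than routine: even the paper's own pair $(M_{1},M^{\bar{j}})$ fails your requirement, since they disagree on equality atoms (e.g.\ $x_{12}=x_{22}$ holds in $M_{1}$, both entries being $\ast_{1}$, but fails in $M^{\bar{j}}$ when $j_{1}=j_{2}=2$, where the entries are $c_{1}$ and $c_{2}$), so a fully rigorous argument along your lines needs the padding designed so that all relevant tuples share a single equality type (or a separate argument handling equality atoms). As it stands, your proposal correctly identifies where the proof must do its work, but does not do that work; it is a plan for a proof, not a proof.
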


\begin{proof}
Fix a natural number $n\geq3$ and let $A$ be the union of four pairwise
disjoint $n$-element sets, say
\[
A=\{a_{1},\dots,a_{n}\}\cup\{b_{1},\dots,b_{n}\}\cup\{c_{1},\dots,c_{n}\}\cup\{\ast_{1},\dots,\ast_{n}\}.
\]
Let $M_{1}$ be the $n\times n$ matrix such that
\begin{itemize}
\item the first column of $M_{1}$ is $\bar{a}$, and the rest of its entries
are $\ast_{1}$.
\end{itemize}
For each $j\in\{2,\dots,n\}$ take $M_{j}$ to be the $n\times n$
matrix such that
\begin{itemize}
\item the first column of $M_{j}$ is $\bar{b}$, the $j$-th column of
$M_{j}$ is $\bar{c}$, and the rest of its entries are $\ast_{j}$.
\end{itemize}
Next, define
\[
R:=\{\bar{a},\bar{b},\bar{c}\}\cup\{\text{rows of }M_{1}\}\cup\{\text{rows of }M_{2}\}\cup\dots\cup\{\text{rows of }M_{n}\}.
\]
In what follows we identify $n^{2}$-tuples with $n\times n$ matrices
(in the obvious way). We adjust the indexes of our variables accordingly,
i.e., we shall write 
\[
\varphi(\begin{array}{c}
x_{11}\dots x_{1n}\\
\ddots\\
x_{n1}\dots x_{nn}
\end{array})
\]
instead of $\varphi(x_{1},\dots,x_{n^{2}})$.

Let $T=\{M_{1}\}$; we prove that $\mathbf{A}_{n}:=(A,R)$ and $T_{n}:=T$
satisfy the conditions of the theorem. First observe that, according
to our definition of size (see Section~\ref{sec:size}), we have 
\begin{align*}
\size(\mathbf{A}_{n},T_{n}) & =(\size(\mathbf{A}_{n})+n^{2})\log4n\\
 & =(1+n)+(4n+n(3+n^{2})+n^{2})\log4n\\
 & =O(n^{3}\log n)
\end{align*}
 Next, define 
\[
\alpha(\begin{array}{c}
x_{11}\dots x_{1n}\\
\ddots\\
x_{n1}\dots x_{nn}
\end{array}):=R(x_{11},\dots,x_{n1})\wedge\bigwedge_{j=1}^{n}R(x_{1j},\dots,x_{nj}).
\]
Suppose $M$ is an $n\times n$ matrix with entries in $A$ such that
$\mathbf{A}_{n}\vDash\alpha(M)$. Then the first column of $M$, say
$\bar{m}=(m_{1},\dots,m_{n})$, and each of its rows must be in $R$.
As each $m_{i}$ must be the first coordinate of some tuple in $R$,
we have that $\bar{m}\in\{\bar{a},\bar{b}\}$. If $\bar{m}=\bar{a}$,
then $M=M_{1}$, since for each $i\in\{1,\dots,n\}$ there is exactly
one tuple in $R$ with first coordinate $a_{i}$. On the other hand,
if $\bar{m}=\bar{b}$, we have that for each $i\in\{1,\dots,n\}$
the $i$-th row of $M$ agrees with the $i$-th row of $M_{j_{i}}$
for some $j_{i}\in\{2,\dots,n\}$.

Given $\bar{j}\in\{2,\dots,n\}^{n}$ let $M^{\bar{j}}$ be the $n\times n$
matrix whose $i$-th row is the $i$-th row of $M_{j_{i}}$ for $i\in\{1,\dots,n\}$.
From the considerations in the last paragraph it follows that 
\[
\{M:\mathbf{A}_{n}\vDash\alpha(M)\}=\{M_{1}\}\cup\{M^{\bar{j}}:\bar{j}\in\{2,\dots,n\}^{n}\}.
\]
For each $\bar{j}\in\{2,\dots,n\}^{n}$ define
\[
\lambda_{\bar{j}}(\begin{array}{c}
x_{11}\dots x_{1n}\\
\ddots\\
x_{n1}\dots x_{nn}
\end{array}):=R(x_{1j_{1}},\dots,x_{nj_{n}}),
\]
and observe that $\mathbf{A}_{n}\vDash\lambda_{\bar{j}}(M^{\bar{j}})$.
So, if we take 
\[
\beta(\begin{array}{c}
x_{11}\dots x_{1n}\\
\ddots\\
x_{n1}\dots x_{nn}
\end{array}):=\bigwedge_{\bar{j}\in\{2,\dots,n\}^{n}}\neg\lambda_{\bar{j}},
\]
it follows that $\mathbf{A}_{n}\vDash(\alpha\wedge\beta)(M)\Longleftrightarrow M=M_{1}$.
Thus $T=\{M_{1}\}$ is open-definable in $\mathbf{A}_{n}$.

To conclude, note that $\lambda_{\bar{j}}$ is the only atomic formula
that distinguishes $M_{1}$ from $M^{\bar{j}}$. Thus, if $\varphi$
is an open formula that defines $T$, then $\neg\lambda_{\bar{j}}$
must occur in $\varphi$ for every $\bar{j}\in\{2,\dots,n\}^{n}$.
\end{proof}
\bibliographystyle{plain}
\bibliography{opendef}

\begin{thebibliography}{10}

\bibitem{arec:usin11}
C.~Areces, S.~Figueira, and D.~Gor\'{\i}n.
\newblock Using logic in the generation of referring expressions.
\newblock In S.~Pogodalla and J.~Prost, editors, {\em Proceedings of the 6th
  International Conference on Logical Aspects of Computational Linguistics},
  volume 6736 of {\em LNCS}, pages 17--32. Springer, 2011.

\bibitem{arec:refe08}
C.~Areces, A.~Koller, and K.~Striegnitz.
\newblock Referring expressions as formulas of description logic.
\newblock In {\em Proceedings of the Fifth International Natural Language
  Generation Conference}, pages 42--49. ACL, 2008.

\bibitem{Aren:exac16}
M.~Arenas and G.~Diaz.
\newblock The exact complexity of the first-order logic definability problem.
\newblock {\em ACM Transactions on Database Systems}, 41(2):13:1--13:14, 2016.

\bibitem{banc:comp78}
F.~Bancilhon.
\newblock On the completeness of query languages for relational data bases.
\newblock In J.~Winkowski, editor, {\em Proceedings of the 7th Symposium of
  Mathematical Foundations of Computer Science}, volume~64, pages 112--123.
  Springer, 1978.

\bibitem{bera:syst10}
B.~Berard, M.~Bidoit, A.~Finkel, F.~Laroussinie, A.~Petit, L.~Petrucci, and
  P.~Schnoebelen.
\newblock {\em Systems and Software Verification: Model-Checking Techniques and
  Tools}.
\newblock Springer, 2010.

\bibitem{lemas_semanticos}
M.~Campercholi and D.~Vaggione.
\newblock Semantical conditions for the definability of functions and
  relations.
\newblock {\em Algebra universalis}, 76(1):71--98, 2016.

\bibitem{clar:mode99}
E.~Clarke, O.~Grumberg, and D.~Peled.
\newblock {\em Model Checking}.
\newblock MIT Press, 1999.

\bibitem{Downey_Fellows_W_hierarchy}
R.~Downey and M.~Fellows.
\newblock Fixed-parameter tractability and completeness {I}: Basic results.
\newblock {\em SIAM Journal on Computing}, 24(4):873--921, 1995.

\bibitem{DOWNEY1995_Clique_W1_Complete}
R.~Downey and M.~Fellows.
\newblock Fixed-parameter tractability and completeness {II}: On completeness
  for {W}[1].
\newblock {\em Theoretical Computer Science}, 141(1):109--131, 1995.

\bibitem{DowneyFellowsBook}
R.~Downey and M.~Fellows.
\newblock {\em Fundamentals of Parameterized Complexity}.
\newblock Springer Publishing Company, Incorporated, 2013.

\bibitem{ebbi:math}
H.~Ebbinghaus, J.~Flum, and W.~Thomas.
\newblock {\em Mathematical Logic}.
\newblock Springer-Verlag, 1994.

\bibitem{figu:size10}
S.~Figueira and D.~Gor\'in.
\newblock On the size of shortest modal descriptions.
\newblock In L.~Beklemishev, V.~Goranko, and V.~Shehtman, editors, {\em
  Advances in Modal Logic}, volume~8, pages 114--132. College Publications,
  2010.

\bibitem{FlumGrohePaper}
J.~Flum and M.~Grohe.
\newblock Fixed-parameter tractability, definability, and model-checking.
\newblock {\em SIAM Journal on Computing}, 31(1):113--145, 2001.

\bibitem{FlumGroheBook}
J.~Flum and M.~Grohe.
\newblock {\em Parameterized Complexity Theory}.
\newblock Texts in Theoretical Computer Science. An EATCS Series.
  Springer-Verlag New York, Inc., Secaucus, NJ, USA, 2006.

\bibitem{LibroNP}
M.~Garey and D.~Johnson.
\newblock {\em Computers and Intractability: A Guide to the Theory of
  NP-Completeness}.
\newblock W. H. Freeman and Co., 1979.

\bibitem{jeav:howt99}
P.~Jeavons, D.~Cohen, and M.~Gyssens.
\newblock How to determine the expressive power of constraints.
\newblock {\em Constraints}, 4(2):113--131, 1999.

\bibitem{krah:comp12}
E.~Krahmer and K.~van Deemter.
\newblock Computational generation of referring expressions: A survey.
\newblock {\em Computational Linguistics}, 38(1):173--218, 2012.

\bibitem{pare:expr78}
J.~Paredaens.
\newblock On the expressive power of the relational algebra.
\newblock {\em Information Processing Letters}, 7(2):107--111, 1978.

\bibitem{will:test10}
R.~Willard.
\newblock Testing expressibility is hard.
\newblock In {\em Proceedings of the 16th International Conference on
  Principles and Practice of Constraint Programming}, pages 9--23, 2010.

\end{thebibliography}

\end{document}